\newtheorem{theorem}{Theorem}[section]
\newtheorem{lemma}[theorem]{Lemma}
\newtheorem{meta-theorem}[theorem]{Meta-Theorem}
\newtheorem{claim}[theorem]{Claim}
\newtheorem{remark}[theorem]{Remark}
\newtheorem{corollary}[theorem]{Corollary}
\newtheorem{definition}[theorem]{Definition}
\definecolor{darkgreen}{rgb}{0,0.5,0}
\newcommand{\eps}{\varepsilon}
\newcommand{\poly}{\operatorname{\text{{\rm poly}}}}
\newcommand{\exclude}[1]{}
\newcommand{\FullOrShort}{full}
  \newcommand{\fullOnly}[1]{#1}	
	\newcommand{\tempfullOnly}[1]{#1}
  \newcommand{\shortOnly}[1]{}
    \newcommand{\fullOnly}[1]{}
		\newcommand{\tempfullOnly}[1]{}
		\newcommand{\shortOnly}[1]{#1}
    \newcommand{\IncludePictures}[1]{}
\begin{document}

\date{}

\title{Synchronization Strings: List Decoding for Insertions and Deletions\footnote{Supported in part by a Simons Investigator Award, NSF grants CCF-1565641, CCF-1715187, CCF-1527110, CCF-1618280, and NSF CAREER award CCF-1750808.
}}	

\author{Bernhard Haeupler\\Carnegie Mellon University\\ \texttt{haeupler@cs.cmu.edu} \and
Amirbehshad Shahrasbi\\Carnegie Mellon University\\ \texttt{shahrasbi@cs.cmu.edu} \and
Madhu Sudan\\Harvard University\\ \texttt{madhu@cs.harvard.edu}}

\allowdisplaybreaks

\maketitle

\begin{abstract}
We study codes that are list-decodable under insertions and deletions (``insdel codes''). Specifically, we consider the setting where, given a codeword $x$ of length $n$ over some finite alphabet $\Sigma$ of size $q$, $\delta\cdot n$ codeword symbols may be adversarially deleted and $\gamma\cdot n$ symbols may be adversarially inserted to yield a corrupted word $w$. A code is said to be list-decodable if there is an (efficient) algorithm that, given $w$, reports a small list of codewords that include the original codeword $x$. Given $\delta$ and $\gamma$ we study what is the rate $R$ for which there exists a constant $q$ and list size $L$ such that there exist codes of rate $R$ correcting $\delta$-fraction insertions and $\gamma$-fraction deletions while reporting lists of size at most $L$.

Using the concept of {\em synchronization strings}, introduced by the first two authors [Proc. STOC 2017], we show some surprising results. We show that for every $0\leq \delta < 1$, every $0 \leq \gamma < \infty$ and every $\eps > 0$ there exist codes of rate $1 - \delta - \eps$ and constant alphabet (so $q = O_{\delta,\gamma,\eps}(1)$) and sub-logarithmic list sizes. Furthermore, our codes are accompanied by efficient (polynomial time) decoding algorithms. We stress that the fraction of insertions can be arbitrarily large (more than 100\%), and the rate is independent of this parameter. We also prove several tight bounds on the parameters of list-decodable insdel codes. In particular, we show that the alphabet size of insdel codes needs to be exponentially large in $\eps^{-1}$, where $\eps$ is the gap to capacity above. Our result even applies to settings where the unique-decoding capacity equals the list-decoding  capacity and when it does so, it shows that the alphabet size needs to be exponentially large in the gap to capacity. This is sharp contrast to the Hamming error model where alphabet size polynomial in $\eps^{-1}$ suffices for unique decoding. This lower bound also shows that the exponential dependence on the alphabet size in previous works that constructed insdel codes is actually necessary! 

Our result sheds light on the remarkable asymmetry between the impact of insertions and deletions from the point of view of error-correction: Whereas deletions cost in the rate of the code, insertion costs are borne by the adversary and not the code! Our results also highlight the dominance of the model of insertions and deletions over the Hamming model: A Hamming error is equal to one insertion and one deletion (at the same location). Thus the effect of $\delta$-fraction Hamming errors can be simulated by $\delta$-fraction of deletions and $\delta$-fraction of insertions --- but insdel codes can deal with much more insertions without loss in rate\fullOnly{ (though at the price of higher alphabet size)}. 

\end{abstract}
	
\setcounter{page}{0}
\thispagestyle{empty}

\newpage
\section{Introduction}

We study the complexity of ``insdel coding'', i.e., codes designed to recover from insertion and deletion of characters, under the model of ``list-decoding'', i.e., when the decoding algorithm is allowed to report a (short) list of potential codewords that is guaranteed to include the transmitted word if the number of errors is small enough.
Recent work by the first two authors and collaborators~\cite{haeupler2017synchronization} has shown major progress leading to tight, or nearly tight, bounds on central parameters of  codes (with efficient encoding and decoding algorithms as well) under the setting of {\em unique} decoding. However the list-decoding versions of these questions were not explored previously. Our work complements the previous studies by exploring list-decoding. In the process our results also reveal some striking features of the insdel coding problem that were not exposed by previous works. To explain some of this, we introduce our model and lay out some of the context below.

\subsection{Insdel Coding and List Decoding}

We use the phrase ``insdel coding'' to describe the study of codes that are aimed to recover from {\em insertions} and {\em deletions}. The principal question we ask is ``what is the {\em rate} of a code that can recover from $\gamma$ fraction insertions and $\delta$ fraction deletions over a sufficiently large alphabet?''. Once the answer to this question is determined we ask how small an alphabet suffices to achieve this  rate. We define the terms ``rate'', ``alphabet'', and ``recovery'' below. 

An insdel {\em encoder} over alphabet $\Sigma$ of block length $n$ is an injective function $E:\Sigma^k \to \Sigma^n$. The associated ``code'' is the image of the function $C$. The rate of a code is the ratio $k/n$. We say that an insdel code $C$ is $(\gamma,\delta,L(n))$-list-decodable if there exists a function $D:\Sigma^*\to 2^C$ such that $|D(w)| \leq L(n)$ for every $w \in \Sigma^*$ and 
for every codeword $x \in C$ and every word $w$ obtained from $x$ be $\delta\cdot n$ deletions of characters in $x$ followed by $\gamma \cdot n$ insertions, it is the case that $x \in D(w)$.
In other words the list-decoder $D$ outputs a list of at most $L(n)$ codewords that is guaranteed to include the transmitted word $x$ if the received word $w$ is obtained from $x$ by at most $\delta$-fraction deletions and $\gamma$-fraction insertions. Our primary quest in this paper is the largest rate $R$ for which there exists an alphabet $q \triangleq |\Sigma|$ and an
infinite family of insdel codes of rate at least $R$, that are $(\gamma,\delta,L(n))$-list-decodable. Of course we are interested in results where $L(n)$ is very slowly growing with $n$ (if at all). In all results below we get $L(n)$ which is polynomially large in terms of $n$.  
Furthermore, when a given rate is achievable we seek codes with efficient encoder and decoder (i.e., the functions $E$ and $D$ are polynomial time computable). Finally we also explore the dependence of the rate on the alphabet size (or vice versa).

\paragraph{Previous Work.} 
Insdel coding was first studied by Levenshtein~\cite{Levenshtein65} and since then many bounds and constructions for such codes have been given. 
With respect to unique decoding, Schulman and Zuckerman~\cite{schulman1999asymptotically} gave the first 
construction of efficient insdel codes over a constant alphabet with a (small) constant relative distance and a (small) constant rate in 1999.
Guruswami and Wang~\cite{guruswami2017deletion} gave the first efficient codes over fixed alphabets to correct a deletion fraction approaching $1$, as well as efficient binary codes to correct a small constant fraction of deletions with rate approaching $1$. A follow-up work gave new and improved codes with similar rate-distance tradeoffs which can be efficiently decoded from insertions and deletions~\cite{guruswami2016efficiently}.  Finally,~\cite{haeupler2017synchronization} gave codes that can correct $\delta$ fraction of synchronization errors with a rate approaching $1-\delta-\eps$.

A recent work by Wachter-Zeh~\cite{wachter2017list} considers insdel coding with respect to list decoding and provides Johnson-like upper-bounds for insertions and deletions, i.e., bounds on the list size in terms of minimum edit-distance of a given code.

Several other variants of the insdel coding problem have been studied in the previous work and are summarized by the following surveys~\cite{sloane2002single, mitzenmacher2009survey, mercier2010survey}.

\subsection{Our Results}

We now present our results on the rate and alphabet size of insdel coding under list-decoding. Two points of contrast that we use below are corresponding bounds in (1) the Hamming error setting for list-decoding and (2) the insdel coding setting with unique-decoding. 

\subsubsection{Rate Under List Decoding}

Our main theorem for list-decoding shows that, given $\gamma,\delta,\eps \geq 0$ there is a $q = q_{\eps,\gamma}$ and a slowly growing function $L = L_{\eps,\gamma}(n)$ such that there are $q$-ary insdel codes that achieve a rate of $1-\delta-\eps$ that are $(\gamma,\delta,L(n))$-list decodable. Furthermore the encoding and decoding are efficient! The formal statement of the main result is as follows.

\begin{theorem}\label{thm:InsDelListDecoding}
For every $0<\delta,\eps<1$ and $\gamma > 0$, there exist a family of list-decodable insdel codes that can protect against $\delta$-fraction of deletions and $\gamma$-fraction of insertions and achieves a rate of at least $1-\delta-\eps$ or more over an alphabet of size 
$O_{\gamma, \eps}\left(1\right)$
. These codes are list-decodable with lists of size $L_{\eps, \gamma}(n)= \exp\left(\log^* n\right)$, and have polynomial time encoding and decoding complexities.
\end{theorem}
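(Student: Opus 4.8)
The strategy is to index a near–Singleton-optimal Hamming-error-correcting code by a synchronization string, in the spirit of~\cite{haeupler2017synchronization}, and to use list decoding \emph{solely} to absorb the havoc caused by the unbounded insertion fraction $\gamma$, so that the rate pays only for the $\delta$-fraction of deletions. Fix a small parameter $\eps' = \Theta\!\big(\eps/(1+\gamma)\big)$. Ingredient one: an $\eps'$-synchronization string $S\in\Sigma_S^n$ with $|\Sigma_S| = O_{\eps'}(1) = O_{\eps,\gamma}(1)$, which exists and is efficiently constructible by~\cite{haeupler2017synchronization}. Ingredient two: an efficient Hamming code $C_H\subseteq\Sigma_H^n$ of rate at least $1-\delta-\eps/2$ that corrects any pattern of $\rho$ erasures and $\sigma$ substitutions with $\rho+2\sigma\le(\delta+\eps/2)n$; near-MDS codes of this kind over a constant-size alphabet $|\Sigma_H|=O_\eps(1)$, with polynomial-time encoding and decoding, are supplied e.g.\ by algebraic-geometry codes. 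Choosing $|\Sigma_H|$ large enough (still $O_{\eps,\gamma}(1)$) that $\log|\Sigma_S|\le\tfrac{\eps}{2}\log|\Sigma_H|$, we define the code $C$ to send a message $m$ as $x=\big((c_1,S_1),\dots,(c_n,S_n)\big)\in(\Sigma_H\times\Sigma_S)^n$ with $c=C_H(m)$. Its alphabet size is $|\Sigma_H|\cdot|\Sigma_S|=O_{\eps,\gamma}(1)$ and its rate is at least $(1-\delta-\eps/2)\cdot\frac{\log|\Sigma_H|}{\log|\Sigma_H|+\log|\Sigma_S|}\ge 1-\delta-\eps$.

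Decoding proceeds by reduction to Hamming decoding. Let $w$, of length $n'\le(1-\delta+\gamma)n$, be received, and write $w_j=(a_j,b_j)$. From the received synchronization symbols $b_1,\dots,b_{n'}$ one wants, for each original index $t\in[n]$, a guess $\tilde c_t\in\Sigma_H\cup\{?\}$ such that $\tilde c$ is close to $c$ in Hamming distance. Because up to $\gamma n$ (possibly $\gg n$) garbage symbols have been inserted, no single monotone matching of $w$ against $S$ can be trusted; instead we run the list version of the synchronization-string decoder of~\cite{haeupler2017synchronization} to output a short list of candidate matchings, hence a short list of repositioned words $\tilde c^{(1)},\dots,\tilde c^{(L)}\in(\Sigma_H\cup\{?\})^n$. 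The $\eps'$-synchronization property should ensure that for the genuine transmission there is an index $i^\star$ for which $\tilde c^{(i^\star)}$ correctly recovers $c_t$ in every slot $t$ except: the $\delta n$ slots whose symbol was deleted (these become $?$), and at most $O(\eps' n') = O(\eps n)$ further slots (real symbols placed in the wrong slot, or garbage symbols placed in a slot). Hence $\tilde c^{(i^\star)}$ differs from $c$ by at most $\delta n$ erasures and $O(\eps n)$ substitutions; feeding every $\tilde c^{(i)}$ to the $C_H$ decoder and returning the union of the outputs, the message $m$ lies on the list (after rescaling the $\eps$ constants so that $\delta n + 2\cdot O(\eps n)$ stays within the error tolerance of $C_H$).

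It remains to bound $L$. A single level of the above yields $L = O_{\eps,\gamma}(1)$, but keeping $|\Sigma_S|$ constant while it must still ``name'' positions in $[n]$ requires recursively encoding the synchronization-index component by the same scheme, as in the alphabet-reduction of~\cite{haeupler2017synchronization}; with $O(\log^* n)$ levels of recursion, each multiplying the list size by a constant $c_{\eps,\gamma}$, one obtains $L = c_{\eps,\gamma}^{\,O(\log^* n)} = \exp\!\big(O(\log^* n)\big)$. Since $L = n^{o(1)}$ and each component decoder runs in polynomial time, the whole decoder runs in polynomial time.

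The main obstacle is the insdel-to-Hamming reduction in the regime of unbounded $\gamma$: in the unique-decoding analysis of~\cite{haeupler2017synchronization} insertions and deletions enter the error budget symmetrically, so $\gamma$ would cost $\gamma$ in the rate, and the crux here is to show that a list of size $O_{\eps,\gamma}(1)$ of candidate repositionings suffices to neutralize all but an $\eps$-fraction of the distortion produced by the $\gamma n$ insertions. Proving that the number of matchings one must retain does not blow up as $\gamma\to\infty$ is where the self-matching structure of $\eps'$-synchronization strings must be exploited most carefully; the recursive alphabet reduction and the choice of $C_H$ are then comparatively routine.
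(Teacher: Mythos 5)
Your high-level architecture---index a near-capacity code with an $\eps'$-synchronization string and absorb the damage from unbounded insertions via list decoding---is the right one, and your choice $\eps' = \Theta(\eps/(1+\gamma))$ and the rate bookkeeping match the paper. But the inner decoding step is where the argument breaks, and the gap is exactly the one you flag at the end as ``the main obstacle.''

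The paper does not produce a short list of candidate \emph{repositioned words} $\tilde c^{(1)},\dots,\tilde c^{(L)}$ to be handed to a standard Hamming (erasure/substitution) decoder, and it is not clear such a list of size $O_{\eps,\gamma}(1)$ exists: the number of monotone matchings of $w$ against $S$ that an adversary could engineer grows with $n$, and the self-matching property of synchronization strings bounds the number of \emph{wrong matched pairs inside one matching}, not the number of matchings that must be retained. What Lemma~\ref{lem:globalDecodingListSize} / Theorem~\ref{thm:InsDelListDecodingStep1} actually deliver is something weaker but more usable: \emph{per-position} candidate lists $A_1,\dots,A_n$, each of size $O((1+\gamma)/\eps)$, such that at least a $1-\delta-\eps$ fraction of them contain the true symbol. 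This is precisely the input format of a \emph{list-recoverable} code, not of an erasure/substitution decoder for an AG code, and the paper accordingly plugs in the capacity-achieving list-recoverable codes of Hemenway, Ron-Zewi, and Wootters (Theorem~\ref{thm:PolyTimeListRecoverable}). An AG code cannot process ``for each position here is a constant-size list, most of which contain the right symbol''; you would first have to collapse the lists to single guesses, at which point the $\gamma n$ inserted garbage symbols can force far more than $O(\eps n)$ substitutions, reintroducing a $\gamma$-dependence in the rate that your construction was designed to avoid.

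The $\exp(\log^* n)$ list size also has a different origin than you propose. It is not a recursive alphabet-reduction of the synchronization component (synchronization strings over an alphabet of size $\poly(1/\eps')$ exist for all $n$ with no recursion needed; they do not ``name'' positions), but is inherited directly from the list size $\bigl(l/\epsilon\bigr)^{2\log^*(mn)}$ of the list-recoverable code in Theorem~\ref{thm:PolyTimeListRecoverable}. So the correct fix to your proof is: replace the AG/Hamming code $C_H$ by a $(1-\delta-\eps/4, O((1+\gamma)/\eps), \exp(\log^* n))$-list-recoverable code of rate $1-\delta-\eps/2$, use Theorem~\ref{thm:InsDelListDecodingStep1} to produce per-position candidate lists, and feed these lists directly to the list-recovery decoder as in Theorem~\ref{thm:IndexingBlackBox}. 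The rest of your rate and alphabet-size accounting then goes through essentially unchanged.
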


The rate in the theorem above is immediately seen to be optimal even for $\gamma = 0$. In particular an adversary that deletes that last $\delta \cdot n$ symbols already guarantees an upper bound on the rate of $1 - \delta$. 

We now contrast the theorem above with the two contrasting settings listed earlier. Under unique decoding the best possible rate that can be achieved with $\delta$-fraction deletions and $\gamma$-fraction insertions is upper bounded by $1 - (\gamma + \delta)$. Matching constructions have been achieved, only recently,  by Haeupler and Shahrasbi~\cite{haeupler2017synchronization}.
In contrast our rate has no dependence on $\gamma$ and thus dominates the above result. 
The only dependence on $\gamma$ is in the alphabet size and list-size and we discuss the need for this dependence later below.

We now turn to the standard ``Hamming error'' setting: Here an adversary may change an arbitrary $\delta$-fraction of the codeword symbols. In this setting it is well-known that given any $\eps > 0$, there are constants $q = q(\eps)$ and $L = L(\eps)$ and an infinite family of $q$-ary codes of rate at least $1 - \delta - \eps$ that are list-decodable from $\delta$ fraction errors with list size at most $L$. In a breakthrough from the last decade, Guruswami and Rudra~\cite{guruswami2008explicit} showed explicit codes that achieve this with efficient algorithms. The state-of-the-art results in this field yield list size 
$L(n) = o(\log^{(r)} n)$ for any integer $r$ where $\log^{(r)}$ is the $r$th iterated logarithm and alphabet size $2^{\tilde{O}(\eps^{-2})}$~\cite{GuruswamiXing2017}, which are nearly optimal. 

The Hamming setting with $\delta$-fraction errors is clearly a weaker setting than the setting with $\delta$-fraction deletions and $\gamma \geq \delta$ fraction of insertions in that an adversary of the latter kind can simulate the former. (A Hamming error is a deletion followed by an insertion at the same location.) The insdel setting is thus stronger in two senses: it allows $\gamma > \delta$ and gives greater flexibility to the adversary in choosing locations of insertions and deletions. Yet our theorem shows that the stronger adversary can still be dealt with, with out qualitative changes in the rate. The only difference is in the dependence of $q$ and $L$ on $\gamma$, which we discuss next.

We briefly remark at this stage that while our result simultaneously ``dominates'' the results of Haeupler and Shahrasbi~\cite{haeupler2017synchronization} as well as Guruswami and Rudra~\cite{guruswami2008explicit} this happens because we use their results in our work. 
We elaborate further on this in Section~\ref{sec:InsDelListDecoding}. Indeed our first result (see Theorem~\ref{thm:IndexingBlackBox}) shows how we can obtain Theorem~\ref{thm:InsDelListDecoding} by using capacity achieving ``list-recoverable codes'' in combination with synchronization strings in a modular fashion.

\subsubsection{Rate versus Alphabet Size}

We now turn to understanding how large the alphabet size needs to be as a function of $\delta,\eps$ and $\gamma$. We consider two extreme cases, first with only deletions (i.e., $\gamma=0$ and then with only insertions (i.e., with $\delta = 0$). 

We start first with the insertion-only setting. We note  here that one cannot hope to find a constant rate family of codes that can protect $n$ symbols out of an alphabet of size $q$ against $(q-1)n$ many insertions or more. This is so since, with $(q-1)n$ insertions, one can turn any string $y\in[1..q]^n$ into the fixed sequence $1, 2, \cdots, q, 1, 2, \cdots, q, \cdots, 1, 2, \cdots, q$ by simply inserting $q-1$ many symbols around each symbol of $y$ to construct a $1, \cdots, q$ there. Hence, Theorem~\ref{thm:converse} only focuses on codes that protect $n$ rounds of communication over an alphabet of size $q$ against $\gamma n$ insertions for $\gamma  < q-1$.
 
\begin{theorem}\label{thm:converse}
Any list-decodable family of codes $\mathcal{C}$ that protects against $\gamma$ fraction of insertions for some $\gamma < q-1$ and guarantee polynomially-large list size in terms of block length cannot achieve a rate $R$ that is strictly larger than $1- \log_q(\gamma+1)-\gamma\left(\log_q\frac{\gamma+1}{\gamma} - \log_q \frac{q}{q-1}\right)$.
\end{theorem}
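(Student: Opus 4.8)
The plan is to prove this converse via a counting (volume) argument: if the rate is too high, then some received word $w$ must be reachable from too many codewords, violating the list-size bound. Fix a code $\mathcal{C}\subseteq[q]^n$ of rate $R$, and consider the adversary that is allowed $\gamma n$ insertions. For each codeword $x$, let $B(x)$ be the set of words $w\in[q]^{(1+\gamma)n}$ obtainable from $x$ by exactly $\gamma n$ insertions (we may assume the received length is exactly $(1+\gamma)n$; an adversary who inserts fewer can be handled by a standard padding/truncation argument, or one notes the worst case is the full budget). The key identity I would use is a double-counting over the bipartite ``reachability'' graph between $\mathcal{C}$ and $[q]^{(1+\gamma)n}$: on one side, $\sum_{x\in\mathcal{C}}|B(x)| = \sum_{w}(\text{in-degree of }w)$, and list-decodability forces the in-degree of every $w$ to be at most $L(n)$, hence
\[
\sum_{x\in\mathcal{C}}|B(x)| \;\le\; L(n)\cdot q^{(1+\gamma)n}.
\]
Since $|\mathcal{C}| = q^{Rn}$, it suffices to show every individual insertion ball has size $|B(x)| \ge q^{(1+\gamma)n}\cdot q^{-n\left(\log_q(\gamma+1)+\gamma(\log_q\frac{\gamma+1}{\gamma}-\log_q\frac{q}{q-1})\right)} / \poly(n)$, because then $q^{Rn}\cdot(\text{that bound}) \le L(n) q^{(1+\gamma)n}$ with $L(n)$ polynomial forces $R$ below the claimed threshold (up to a $o(1)$ term absorbed as the ``strictly larger'' slack).

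The heart of the argument is therefore a lower bound on the number of distinct strings obtainable from a fixed $x\in[q]^n$ by $\gamma n$ insertions. First I would reduce to counting rather than just lower-bounding: I'd pick a convenient \emph{injective} family of insertion patterns and count those. The natural choice: partition the $\gamma n$ inserted symbols into the $n+1$ gaps of $x$, and in each gap insert a specific string; to guarantee distinctness regardless of $x$, restrict each inserted symbol placed immediately adjacent to an existing symbol $\sigma$ to be different from $\sigma$ (this is where the $\frac{q-1}{q}$ / $\log_q\frac{q}{q-1}$ correction enters — each insertion has roughly $q-1$ safe choices rather than $q$, to avoid ambiguity about which symbol was ``original''). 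A clean way to make this rigorous is to consider only insertion patterns that produce a word $w$ of length $(1+\gamma)n$ from which $x$ can be recovered \emph{uniquely} by a greedy/canonical deletion rule, and count those words; standard combinatorial identities then give $|B(x)| \ge \binom{(1+\gamma)n}{\gamma n}(q-1)^{\gamma n}/\poly(n)$ or a closely related expression. Applying the entropy approximation $\binom{(1+\gamma)n}{\gamma n} = q^{\,n(1+\gamma)H_q(\gamma/(1+\gamma)) + o(n)}$ and simplifying,
\[
\tfrac{1}{n}\log_q|B(x)| \;\ge\; (1+\gamma)H_q\!\left(\tfrac{\gamma}{1+\gamma}\right) + \gamma\log_q(q-1) - o(1),
\]
and one checks algebraically that $(1+\gamma)H_q\!\big(\tfrac{\gamma}{1+\gamma}\big) = \log_q(\gamma+1) + \gamma\log_q\tfrac{\gamma+1}{\gamma}$, so combining the two displayed inequalities and rearranging yields exactly $R \le 1-\log_q(\gamma+1)-\gamma\big(\log_q\tfrac{\gamma+1}{\gamma}-\log_q\tfrac{q}{q-1}\big)+o(1)$.

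I expect the main obstacle to be establishing the insertion-ball lower bound \emph{with the correct constant in the exponent}, i.e., the $\log_q\frac{q}{q-1}$ term, rather than a lossy bound. The subtlety is twofold: (i) different insertion patterns can collide to the same word $w$ (the ball is genuinely smaller than $\binom{(1+\gamma)n}{\gamma n}q^{\gamma n}$), so one needs a carefully chosen \emph{collision-free} sub-family of patterns whose size still matches the target exponent; and (ii) one must ensure that $w$ being in $B(x)$ does not accidentally mean $w$ is also ``within $\gamma n$ insertions'' of many other codewords in a way that is already counted — but this is exactly what the double-counting handles automatically, so the real work is (i). The restriction ``inserted symbol adjacent to $\sigma$ must differ from $\sigma$'' is the standard device (analogous to counting deletion-distinct or insertion-distinct strings, cf.\ classical Levenshtein-style arguments) that simultaneously guarantees near-injectivity and produces precisely the $(q-1)$-factor; verifying it gives a genuinely collision-free count, and that the resulting count has no lower-order exponential loss, is the one place where care is needed. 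Everything else — the entropy estimate, the $\poly(n)$ slack absorption, and the final algebraic simplification — is routine.
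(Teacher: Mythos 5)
Your double-counting bound $\sum_{x\in\mathcal{C}}|B(x)|\le L(n)\,q^{(1+\gamma)n}$ is precisely the combinatorial restatement of the paper's probabilistic argument, which bounds $\E\bigl[\#\{y\in\mathcal{C}: y\text{ is a subsequence of a uniformly random }Z\in[1..q]^{(1+\gamma)n}\}\bigr]=q^{-(1+\gamma)n}\sum_{x\in\mathcal{C}}|B(x)|$ and extracts the same exponent from the same supersequence count, so the two proofs are essentially identical. The one step you flag as delicate---a collision-free count producing the $(q-1)$ factor---is handled exactly as you suspect, and as the paper does, by conditioning on the \emph{leftmost} embedding of $x$ into $w$: any non-matched position strictly before the last matched position must carry a symbol different from the next matched one (else the greedy/leftmost match would have absorbed it), which gives the disjoint decomposition $|B(x)|=\sum_{a=n}^{(1+\gamma)n}\binom{a-1}{n-1}(q-1)^{a-n}\,q^{(1+\gamma)n-a}$, independent of $x$, whose dominant term is (up to $\poly(n)$) exactly the $\binom{(1+\gamma)n}{\gamma n}(q-1)^{\gamma n}$ you need, so there is no loss in the exponent.
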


In particular the theorem asserts that if the code has rate $R = 1-\eps$, then its alphabet size must be exponentially large in $1/\eps$, namely, $q \geq (\gamma+1)^{1/\eps}$.

Next we turn to the deletion-only case. Here again we note that no constant rate $q$-ary codes can protect against $\delta \ge \frac{q-1}{q}$ fraction of deletions since such a large fraction of deletions may remove all but the least frequent symbol of codewords. Therefore, Theorem~\ref{thm:converseDeletion} below only concerns codes that protect against $\delta\le\frac{q-1}{q}$ fraction of deletions.
\begin{theorem}\label{thm:converseDeletion}
Any list-decodable family of insdel codes that protect against $\delta$-fraction of deletions (and no insertions) for some $0\le \delta < \frac{q-1}{q}$ that are lis-decodable with  polynomially-bounded list size has rate $R$ upper bounded as below:
\begin{itemize}
\item $R \leq f(\delta) \triangleq (1-\delta)\left(1-\log_q\frac{1}{1-\delta}\right)$ where $\delta = \frac{d}{q}$ for some integer $d$.
\item $
R \leq (1-q\delta')f\left(\frac{d}{q}\right)+
q\delta' f\left(\frac{d+1}{q}\right)$ where $\delta = \frac{d}{q}+\delta'$ for some integer $d$ and $0\le \delta' < \frac{1}{q}$.
\end{itemize}
\end{theorem}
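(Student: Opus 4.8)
The plan is a counting argument. The only consequence of list-decodability I will use is that for every word $w$, the number of length-$n$ codewords having $w$ as a subsequence of length $(1-\delta)n$ is at most $L(n)$: each such codeword is turned into $w$ by exactly $\delta n$ deletions, so the list-decoder on input $w$ must report all of them. Hence if I can build a map $\Phi$ that sends every codeword $x\in C$ to one of its length-$(1-\delta)n$ subsequences $\Phi(x)$, and if the image $\Phi(C)$ is small, then $|C|\le|\Phi(C)|\cdot L(n)$, and since $L(n)=\poly(n)$ this gives $R=\tfrac1n\log_q|C|\le\tfrac1n\log_q|\Phi(C)|+o(1)$ as $n\to\infty$.

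The combinatorial heart is the following building block, which is exactly where $f$ comes from. For a string $z\in[q]^N$ and an integer $0\le e<q$, the $e$ rarest symbols of $z$ occur at most $\tfrac eq N$ times in total (the average multiplicity of the $e$ rarest symbols is at most the overall average $N/q$); deleting all their occurrences, and then deleting further symbols arbitrarily down to length $\tfrac{q-e}{q}N$, exhibits a length-$\tfrac{q-e}{q}N$ subsequence of $z$ obtained by exactly $\tfrac eq N$ deletions and supported on a set of $q-e$ symbols. As $z$ ranges over $[q]^N$ this produces at most $\binom qe(q-e)^{\frac{q-e}{q}N}$ distinct words. Taking $e=d$, $z=x$, $N=n$ and plugging into the previous paragraph proves the first bullet, once we note the identity $(1-\delta)\bigl(1-\log_q\tfrac1{1-\delta}\bigr)=\tfrac{q-d}{q}\log_q(q-d)$ at $\delta=d/q$, which follows from $1-\log_q\tfrac{q}{q-d}=\log_q(q-d)$.

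For the second bullet I will split the codeword into two consecutive blocks and apply the building block to each with a different deletion rate whose convex combination is $\delta$. Set $\lambda:=q\delta'\in[0,1)$ and write $x=x_Ax_B$ with $|x_A|=(1-\lambda)n$ and $|x_B|=\lambda n$. Apply the building block to $x_A$ with $e=d$, spending $\tfrac dq(1-\lambda)n$ deletions, to get $w_A$ on a $(q-d)$-symbol set with $|w_A|=(1-\lambda)\tfrac{q-d}{q}n$; and to $x_B$ with $e=d+1$ (legitimate since $\delta<\tfrac{q-1}{q}$ forces $d\le q-2$, so $q-d-1\ge1$), spending $\tfrac{d+1}{q}\lambda n$ deletions, to get $w_B$ on a $(q-d-1)$-symbol set with $|w_B|=\lambda\tfrac{q-d-1}{q}n$. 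The two deletion counts sum to $\tfrac dq n+\tfrac\lambda q n=(\tfrac dq+\delta')n=\delta n$, and $|w_A|+|w_B|=(1-\delta)n$, so $\Phi(x):=w_Aw_B$ is a genuine length-$(1-\delta)n$ subsequence of $x$ obtained by $\delta n$ deletions. Its image has size at most $\binom qd\binom q{d+1}(q-d)^{(1-\lambda)\frac{q-d}{q}n}(q-d-1)^{\lambda\frac{q-d-1}{q}n}$, whose $\tfrac1n\log_q$ tends to $(1-\lambda)f(d/q)+\lambda f((d+1)/q)$; combined with the counting bound of the first paragraph this is exactly the claimed inequality (and the first bullet is recovered as the $\delta'=0$ case).

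The main thing to get right is the choice of adversary for general $\delta$: the single-block ``delete the rarest symbols'' move cleanly achieves only the rates $d/q$, and trying to spend the residual $\delta' n$ deletions directly (say by collapsing a run of the most frequent symbol) leads to positional bookkeeping that does not obviously go through — whereas splitting the codeword and time-sharing between rates $d/q$ and $(d+1)/q$ is clean and is precisely what produces the convex-combination bound. Everything else is routine: non-integrality of $\delta n$, $(1-\lambda)n$, $\tfrac dq(1-\lambda)n$, etc.\ perturbs each relevant length or count by $O(1)$, hence $|\Phi(C)|$ by an $O(1)$ multiplicative factor absorbed into $o(1)$, and $\binom qd$, $\binom q{d+1}$, $L(n)=\poly(n)$ contribute only $o(n)$ to $\log_q\bigl(|\Phi(C)|\cdot L(n)\bigr)$.
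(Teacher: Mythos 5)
Your proof is correct and takes essentially the same approach as the paper: an adversary that deletes all occurrences of the rarest $d$ (resp.\ $d+1$) symbols on two consecutive blocks of lengths $(1-q\delta')n$ and $q\delta'n$, followed by a counting argument bounding $|C|$ by the number of possibly-received strings times $L(n)$. You are somewhat more explicit than the paper in folding the $L(n)=\poly(n)$ factor into the $o(1)$ term and in spelling out the identity $f(d/q)=\tfrac{q-d}{q}\log_q(q-d)$, but the adversarial strategy, the block split, and the entropy bookkeeping are the same.
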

In particular if $\delta = d/q$ for integer $d$ and rate is $1 - \delta - \eps$ then the theorem above asserts that $q \geq \left(\frac1{1-\delta}\right)^{\frac{1-\delta}{\eps}}$, or in other words $q$ must be exponentially large in $1/\eps$. Indeed such a statement is true for all $\delta$ as asserted in the corollary below. 

\begin{corollary}\label{cor:alphabetSize}
There exists a function $f:(0,1) \to (0,1)$ such that 
Any family of insdel codes that protects against $\delta$-fraction of deletions with polynomially bounded list sizes and has rate $1 - \delta - \eps$ must have alphabet size $q \geq \exp\left(\frac{f(\delta)}{\eps}\right)$.
\end{corollary}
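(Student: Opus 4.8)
The plan is to derive Corollary~\ref{cor:alphabetSize} purely algebraically from Theorem~\ref{thm:converseDeletion}, treating the alphabet size $q$ as an unknown and showing the bound forces it to be exponential in $1/\eps$. First I would note that a code family of rate $1-\delta-\eps>0$ is in particular of positive constant rate, so the discussion preceding Theorem~\ref{thm:converseDeletion} rules out $\delta \ge \frac{q-1}{q}$; hence $q$ ranges over the ``valid'' integers $q\ge 2$ with $\delta < \frac{q-1}{q}$, and Theorem~\ref{thm:converseDeletion} applies to every code in the family.

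Second, I would rewrite the (piecewise-linear) bound of Theorem~\ref{thm:converseDeletion} in a uniform shape. Writing $d=\lfloor q\delta\rfloor$, $\delta' = \delta - d/q \in[0,1/q)$, $\lambda = q\delta'\in[0,1)$, $a=d/q$, $b=(d+1)/q$, and $H(x) := (1-x)\ln\frac{1}{1-x}$, one checks that $(1-x)\bigl(1-\log_q\frac1{1-x}\bigr) = (1-x) - \frac{1}{\ln q}H(x)$ and that $(1-\lambda)(1-a) + \lambda(1-b) = 1-\delta$, so the bound of the theorem collapses to
\[
R \;\le\; (1-\delta) - \frac{1}{\ln q}\,\Phi_q(\delta), \qquad \Phi_q(\delta) := (1-\lambda)H(a) + \lambda H(b)\;\ge\;0,
\]
with the first bullet being the case $\lambda=0$. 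Plugging in $R = 1-\delta-\eps$ and rearranging gives $\ln q \ge \Phi_q(\delta)/\eps$, i.e. $q \ge \exp(\Phi_q(\delta)/\eps)$. Thus it suffices to exhibit $f:(0,1)\to(0,1)$ with $f(\delta)\le \Phi_q(\delta)$ for every valid $q$; I would simply take $f(\delta) := \inf_{q\ \mathrm{valid}} \Phi_q(\delta)$ (or any explicit positive lower bound for it).

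Third, I would check $0<f(\delta)<1$. For the upper bound, $H$ is concave on $[0,1)$ with $\max H = H(1-1/e) = 1/e$, so $\Phi_q(\delta)$, a chord value of $H$, is at most $H(\delta)\le 1/e<1$. For the lower bound I would split valid $q$ into two regimes. If $q > \max\{2/\delta,\,2/(1-\delta)\}$ then, since $\lfloor q\delta\rfloor/q\ge \delta-1/q$ and $\lceil q\delta\rceil/q\le\delta+1/q$, we get $a,b\in[\delta/2,(1+\delta)/2]\subset(0,1)$, on which $H$ is bounded below by a positive constant $c(\delta):=\min_{[\delta/2,(1+\delta)/2]}H$, so $\Phi_q(\delta)\ge c(\delta)$. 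The remaining valid $q$ form a finite set, and for each of them $\Phi_q(\delta)>0$: validity forces $d+1\le q-1$, so $b=(d+1)/q\in(0,1)$ and $H(b)>0$, whence $\Phi_q(\delta)\ge\lambda H(b)>0$ when $\lambda>0$, while $\lambda=0$ gives $\Phi_q(\delta)=H(\delta)>0$. The minimum of $c(\delta)$ and these finitely many positive values is a valid choice of $f(\delta)>0$, and the conclusion $q\ge\exp(f(\delta)/\eps)$ follows.

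The main obstacle is exactly the uniform-in-$q$ lower bound of the third step: one has to rule out that the chord value $\Phi_q(\delta)$ degenerates to $0$ as $q$ varies, which is a real concern for small $q$, where the grid $\{d/q\}$ is coarse and the grid points $\lfloor q\delta\rfloor/q,\ \lceil q\delta\rceil/q$ straddling $\delta$ could in principle sit near $0$ or $1$ (where $H$ vanishes). The resolution above is to observe that $\lfloor q\delta\rfloor/q\ge\delta-1/q$ keeps the left grid point away from $0$ once $q$ exceeds a $\delta$-dependent threshold, that validity ($d+1\le q-1$) keeps both grid points strictly below $1$, and that only finitely many ``small'' $q$ remain, each contributing a strictly positive value. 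An alternative, slightly lossier route would be to write down a single closed-form $f(\delta)$ — for instance a fixed constant multiple of $H(\delta/2)$ — and verify $f(\delta)\le\Phi_q(\delta)$ directly from concavity of $H$.
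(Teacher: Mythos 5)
Your proposal is correct, and the skeleton is the same as the paper's: plug $R=1-\delta-\eps$ into Theorem~\ref{thm:converseDeletion}, use the identity $(1-\lambda)(1-a)+\lambda(1-b)=1-\delta$ to cancel the linear part, and reduce to showing that the residual ``chord value'' $\Phi_q(\delta)$ (the paper calls it $g'(\delta,q)$, up to the choice of log base) admits a positive lower bound depending only on $\delta$. Where you diverge is precisely the step the paper flags as the real content: it builds an \emph{explicit} piecewise-linear witness $f$ (apex $(1/2,0.2)$, zero at $0$ and $1$), observes $f\le g$, uses linearity of $f$ on the grid cells $[i/q,(i+1)/q]$ away from the break point to conclude $f\le g'(\cdot,q)$ there, and handles the single cell straddling $1/2$ by a numerical check $g'(1/2,3)\ge 0.2$ plus monotonicity in $q$. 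You instead take $f(\delta)=\inf_q\Phi_q(\delta)$ and prove positivity by a regime split: for $q>\max\{2/\delta,2/(1-\delta)\}$ the grid points $\lfloor q\delta\rfloor/q,\lceil q\delta\rceil/q$ live in $[\delta/2,(1+\delta)/2]\Subset(0,1)$ so $H$ is bounded below by a compactness constant $c(\delta)$, and the remaining valid $q$ form a finite set on which $\Phi_q(\delta)>0$ by inspection. Your route avoids the delicate numerical verification and the case analysis around the break of the piecewise-linear $f$, at the cost of giving a less explicit $f$; the paper's route yields a concrete formula for $f$ one could plot or plug in. Both are sound, and your observations ($f<1$ via $\max H=1/e$, validity forcing $(d+1)/q<1$, $d=0$ only possible together with $\lambda>0$ when $\delta>0$) correctly close the loose ends.
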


\paragraph{Implications for Unique Decoding.}

Even though the main thrust of this paper is list-decoding, Corollary~\ref{cor:alphabetSize} size also has implications for unique-decoding. (This turns out to be a consequence of the fact that the list-decoding radius for deletions-only equals the unique-decoding radius for the same fraction of deletions.) We start by recalling the main result of Haeupler and Shahrasbi~\cite{haeupler2017synchronization}: Given any $\alpha,\eps > 0$ there exists a code of rate $1-\alpha-\eps$ over an alphabet of size $q = \exp(1/\eps)$ that {\em uniquely} decodes from any $\alpha$-fraction synchronization errors, i.e., from   $\gamma$-fraction insertions and $\delta$-fraction deletions for any pair $0 \leq \gamma,\delta$ satisfying $\gamma + \delta \leq \alpha$. Furthermore, this is the best possible rate one can achieve for $\alpha$-fraction synchronization error. (See Appendix~\ref{app:UniqueDecodingRate} for a more detailed description with proof.)

Till now this exponential dependence of the alphabet size on $\eps$ was unexplained. This is also in sharp contrast to the Hamming error setting, where codes are known to get $\eps$ close to unique decoding capacity (half the ``Singleton bound'' on the distance of code) with alphabets of size polynomial in $1/\eps$. Indeed given this contrast one may be tempted to believe that the exponential growth is a weakness of the ``synchronization string'' approach of Haeupler and Shahrasbi~\cite{haeupler2017synchronization}. But Corollary~\ref{cor:alphabetSize} actually shows that an exponential bound is necessary. We state this result for completeness due to the even though it is immediate from the Corollary above, to stress its importance in understanding the nature of synchronization errors.

\begin{corollary}\label{cor:unique-alphabetSize}
There exists a function $f:(0,1) \to (0,1)$ such that for every $\alpha,\eps>0$ every
family of insdel codes of rate $1-\alpha-\eps$ that protects against $\alpha$-fraction of synchronization errors with unique decoding must have alphabet size $q \geq \exp\left(\frac{f(\delta)}{\eps}\right)$.
\end{corollary}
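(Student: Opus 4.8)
The plan is to obtain Corollary~\ref{cor:unique-alphabetSize} as an immediate consequence of Corollary~\ref{cor:alphabetSize}, taking $f$ to be the \emph{same} function produced there and reading the ``$\delta$'' appearing in the statement as ``$\alpha$''. The whole argument is a short reduction: unique decoding from $\alpha$-fraction synchronization errors is far more than what Corollary~\ref{cor:alphabetSize} needs, once one restricts the adversary to deletions only.

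Here is the reduction I would carry out. Suppose $\mathcal{C}$ is an infinite family of insdel codes of rate $1-\alpha-\eps$ over an alphabet $\Sigma$ of size $q$ that protects against $\alpha$-fraction synchronization errors under \emph{unique} decoding, with decoding map $D:\Sigma^*\to C$. Define $D'(w) = \{D(w)\}$; then $|D'(w)| = 1$ for every received word $w$, so $\mathcal{C}$ is list-decodable with list size $L(n)\equiv 1$, which is (trivially) polynomially bounded in the block length. Moreover, ``protecting against $\alpha$-fraction synchronization errors'' means by definition that $D$ recovers $x$ from any word obtained from $x$ by $\gamma\cdot n$ insertions and $\delta\cdot n$ deletions for \emph{every} pair $0\le\gamma,\delta$ with $\gamma+\delta\le\alpha$; in particular it handles the split $(\gamma,\delta) = (0,\alpha)$. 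Hence $(\mathcal{C},D')$ is an infinite family of insdel codes that protects against $\alpha$-fraction of deletions (and no insertions) with polynomially bounded list size and rate $1-\alpha-\eps$.

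Now I would invoke Corollary~\ref{cor:alphabetSize} with $\delta := \alpha$ applied to this family: it yields $q = |\Sigma| \ge \exp\!\left(\frac{f(\alpha)}{\eps}\right)$, which is exactly the claimed bound. Since $\mathcal{C}$ was an arbitrary such family, this proves the corollary.

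I do not expect a genuine obstacle here — all the mathematical content sits in Theorem~\ref{thm:converseDeletion}/Corollary~\ref{cor:alphabetSize}, and the reduction above is purely syntactic. The two points worth verifying carefully are: (i) that the notion of ``protects against $\alpha$-fraction synchronization errors'' in~\cite{haeupler2017synchronization} quantifies over all splits $\gamma+\delta\le\alpha$ rather than fixing one split, which it does, so the specialization to $(0,\alpha)$ in the reduction is legitimate; and (ii) that the implicit constraint $0\le\delta<\tfrac{q-1}{q}$ in the deletion-only results is not binding — if $\alpha\ge\tfrac{q-1}{q}$ then no constant-rate $q$-ary code can protect against $\alpha$-fraction deletions at all, so the conclusion holds vacuously for such $q$ and the stated lower bound is trivially true. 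The reason to record this corollary separately is conceptual rather than technical: it isolates the consequence that the exponential-in-$1/\eps$ alphabet size of the unique-decoding construction of~\cite{haeupler2017synchronization} is unavoidable, and not an artifact of the synchronization-string approach.
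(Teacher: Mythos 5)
Your reduction is exactly the paper's own proof: specialize Corollary~\ref{cor:alphabetSize} to $\gamma=0$, $\delta=\alpha$, and observe that a unique decoder is in particular a list decoder with list size $1$. The extra remark (ii) about the regime $\alpha \ge \tfrac{q-1}{q}$ being vacuous is a reasonable bit of due diligence but does not change the substance; the paper treats the corollary as immediate in the same way.
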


Corollary~\ref{cor:unique-alphabetSize} follows immediately from Corollary~\ref{cor:alphabetSize} by setting $\delta = \alpha$ and $\gamma=0$ (so we get to the zero insertion case) and noticing that a unique-decoding insdel code for $\alpha$-fraction synchronization error is also a list-decoding insdel code for $\delta$-fractions of deletions (and no insertions). The alphabet size lower bound for the latter is also thus a alphabet size lower bound for the former.

We note that it would be desirable to prove a stronger lower bound on alphabet size of every code that unique decodes with $\delta$-fraction deletions, $\gamma$-fraction insertions and has rate $1 - \delta - \gamma - \eps$ but we do not have such a result yet.

\subsubsection{Analysis of Random Codes}
Finally, in Section~\ref{sec:RandomCodeLowerBounds}\shortOnly{ and Appendix~\ref{sec:RandomCodeLowerBoundsCont}}, we provide an analysis of random codes and compute the rates they can achieve while maintaining list-decodability against insertions and deletions. Such rates are essentially lower-bounds for the capacity of insertion and deletion channels and can be compared against the upper-bounds provided in Section~\ref{sec:UpperBounds}. 

Theorem~\ref{thm:achievabilityViaRandomCodesDeletion} shows that the family of random codes over an alphabet of size $q$ can, with high probability, protect against $\delta$-fraction of deletions for any $\delta < 1-1/q$ up to a rate of 
 $1-(1-\delta)\log_q\frac{1}{1-\delta} - \delta \log_q\frac{1}{\delta}-\delta\log_q(q-1) = 1-H_q(\delta)$ 
using list decoding with super-constant list sizes in terms of their block length where $H_q$ represents the $q$-ary entropy function.
 \begin{theorem}\label{thm:achievabilityViaRandomCodesDeletion}
For any alphabet of size $q$ and any $0 \le \delta < \frac{q-1}{q}$, the family of random codes with rate 
$R < 1-(1-\delta)\log_q\frac{1}{1-\delta} - \delta \log_q\frac{1}{\delta}-\delta\log_q(q-1)-\frac{1-\delta}{l+1}$ 
is list-decodable with list size of $l$ from any $\delta$ fraction of deletions with high probability. Further, the family of random deletion-codes with rate $R > 1-(1-\delta)\log_q\frac{1}{1-\delta} - \delta \log_q\frac{1}{\delta}-\delta\log_q(q-1)$ is not list-decodable with high probability.
\end{theorem}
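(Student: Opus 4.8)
The plan is to reduce both directions to a combinatorial statement about common subsequences, and then carry out a first-moment (union bound) and concentration analysis of the random code. Let $m=(1-\delta)n$, let $C=\{x^{(1)},\dots,x^{(M)}\}$ with $M=q^{Rn}$ be a uniformly random $q$-ary code of block length $n$, and let $p$ denote the probability that a uniformly random string in $\Sigma^n$ has a fixed word $w\in\Sigma^m$ as a subsequence. I would first record the elementary fact that $C$ is list-decodable with list size $l$ from $\delta n$ deletions if and only if every $w\in\Sigma^m$ is a subsequence of at most $l$ codewords of $C$; the only slightly non-obvious point is that it suffices to consider received words of length exactly $m$, since a received word produced by fewer deletions is longer, hence is itself a supersequence of some length-$m$ word $w$, and its set of codeword-supersequences is contained in that of $w$. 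Consequently, list-decoding fails exactly when some $w\in\Sigma^m$ is a common subsequence of $l+1$ distinct codewords.

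For achievability I would union bound over the $q^m$ choices of $w$ and the $\binom{M}{l+1}$ choices of an $(l+1)$-subset $S$ of codeword indices: since the codewords are i.i.d.\ uniform, the probability that all codewords indexed by $S$ are supersequences of a fixed $w$ is exactly $p^{l+1}$. The quantitative heart is the estimate of $p$. Here one uses the classical fact that the number of length-$n$ supersequences of a length-$m$ word over a $q$-ary alphabet does not depend on the word and equals $\sum_{i=0}^{\delta n}\binom{n}{i}(q-1)^i$; since $\delta<(q-1)/q$ the summand is nondecreasing in $i$ over the relevant range, so the sum lies between $\binom{n}{\delta n}(q-1)^{\delta n}$ and $(\delta n+1)\binom{n}{\delta n}(q-1)^{\delta n}$, and $\binom{n}{\delta n}(q-1)^{\delta n}=q^{nH_q(\delta)\pm o(n)}$, whence $p=q^{-n(1-H_q(\delta))\pm o(n)}$. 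Plugging this in, the failure probability is at most $q^{\,(l+1)Rn+(1-\delta)n-(l+1)n(1-H_q(\delta))+o(n)}$, whose exponent is negative — so the random code is $l$-list-decodable with probability $1-q^{-\Omega(n)}$ — precisely when $R<1-H_q(\delta)-\frac{1-\delta}{l+1}$. (One should also note that the $M$ random codewords are distinct with probability $1-o(1)$, since $M\ll q^n$.)

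For the converse I would fix a single worst-case received word, $w_0=0^m$, and study $Z=|\{i:w_0\text{ is a subsequence of }x^{(i)}\}|$, which is a $\mathrm{Bin}(M,p)$ random variable. Using only the lower estimate $p\ge\binom{n}{\delta n}(q-1)^{\delta n}q^{-n}\ge q^{-n(1-H_q(\delta))-o(n)}$, we get $\mathbb{E}[Z]=Mp\ge q^{n(R-1+H_q(\delta))-o(n)}$, which is $q^{\Omega(n)}$ as soon as $R>1-H_q(\delta)$. A Chernoff bound then gives $Z\ge\mathbb{E}[Z]/2=q^{\Omega(n)}$ with probability $1-\exp(-q^{\Omega(n)})$. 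Since on input $w_0$ every valid decoder must output a list containing all $Z$ of these codewords, the required list size is exponential in $n$, so with high probability the random code is not list-decodable with any polynomially bounded list.

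I expect the main obstacle to be the estimate for $p$, i.e.\ the count of length-$n$ supersequences of a fixed word. The crude bound that chooses which $\delta n$ positions of $x$ are insertions and what symbols they carry, namely $\binom{n}{\delta n}q^{\delta n}$, overcounts supersequences by a factor $\bigl(q/(q-1)\bigr)^{\delta n}$ — a random supersequence of $w$ is typically a supersequence in about $(q-1)^{\delta n}$ different ways, which is exactly what this union bound double-counts — and it would only yield a rate smaller by $\delta\log_q\frac{q}{q-1}>0$ than the one claimed, missing the $\delta\log_q(q-1)$ refinement. So the tight statement genuinely needs the exact supersequence count; everything else — the reduction to common subsequences, the union bound, and the concentration step — is routine.
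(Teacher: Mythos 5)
Your proposal is correct and follows the same overall plan as the paper: reduce to the condition that no length-$(1-\delta)n$ word is a common subsequence of more than $l$ codewords, establish the estimate $p = q^{-n(1-H_q(\delta))\pm o(n)}$ for the probability that a fixed such word is a subsequence of a uniformly random length-$n$ word, then run a union-bound/first-moment argument for achievability and a Chernoff lower-tail argument (at a fixed received word) for the converse. The one genuine difference is in how $p$ is obtained. The paper conditions on the leftmost embedding of the received word into a random codeword, arriving at a sum of the form $\sum_{l'}\binom{l'}{n(1-\delta)}q^{-n(1-\delta)}(1-1/q)^{l'-n(1-\delta)}$ and arguing that the $l'=n$ term dominates because $\delta<(q-1)/q$; you instead invoke the classical fact that the number of length-$n$ supersequences of a fixed length-$(1-\delta)n$ $q$-ary word equals $\sum_{i=0}^{\delta n}\binom{n}{i}(q-1)^i$ independently of the word, and observe that the $i=\delta n$ term dominates for the same reason. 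These are two routes to the identical estimate; yours has the pedagogical advantage of pinpointing exactly why the naive $\binom{n}{\delta n}q^{\delta n}$ supersequence bound would lose the $\delta\log_q(q-1)$ refinement, as you note in your final paragraph. Your explicit reduction to received words of length exactly $(1-\delta)n$ (by passing to a shorter subsequence, whose supersequence set is larger) and your aside about the near-certain distinctness of the random codewords are minor points of rigor the paper elides.
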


Further, Theorem~\ref{thm:achievabilityViaRandomCodes} shows that the family of random block codes over an alphabet of size $q$ can, with high probability, protect against $\gamma$ fraction of insertions for any $\gamma < q-1$ up to a rate of 
$1  - \log_q(\gamma +1) - \gamma\log_q\frac{\gamma+1}{\gamma}$ using list decoding with super-constant list sizes in terms of block length. 

\begin{theorem}\label{thm:achievabilityViaRandomCodes}
For any alphabet of size $q$ and any $\gamma < q-1$, the family of random codes with rate 
$R < 1-\log_q(\gamma +1) - \gamma\log_q\frac{\gamma+1}{\gamma} - \frac{\gamma +1}{l+1}$ 
is list-decodable with a list size of $l$ from any $\gamma n$ insertions with high probability.
\end{theorem}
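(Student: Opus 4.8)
The plan is to use a first-moment / union-bound argument over random codes, exactly paralleling the standard proof that random codes meet the list-decoding capacity in the Hamming setting, but with the combinatorics of insertions replacing Hamming balls. Fix the alphabet $\Sigma=[q]$, block length $n$, and a rate $R$ below the claimed threshold; let $\mathcal{C}$ consist of $q^{Rn}$ codewords chosen independently and uniformly at random from $\Sigma^n$. The code fails to be $(\gamma,0,l)$-list-decodable precisely when there is some received word $w\in\Sigma^{\le(1+\gamma)n}$ together with $l+1$ distinct codewords $x_0,\dots,x_l$ such that each $x_i$ can be obtained from $w$ by deleting $\gamma n$ symbols — equivalently, $w$ is a supersequence of each $x_i$ obtainable by $\gamma n$ insertions. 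I would bound the probability of this bad event by $\sum_w \binom{q^{Rn}}{l+1} \left(\Pr[\text{a fixed random }x \text{ is an insertion-}\gamma n\text{ subsequence of }w]\right)^{l+1}$ and show it tends to $0$.

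The key quantity is therefore $p_w := \Pr_{x\sim\Sigma^n}[x \text{ can be obtained from } w \text{ by } \gamma n \text{ deletions}]$, i.e. the fraction of strings in $\Sigma^n$ that are subsequences of a fixed $w$ of length $(1+\gamma)n$. The number of distinct length-$n$ subsequences of a fixed length-$(1+\gamma)n$ string is at most $\binom{(1+\gamma)n}{\gamma n}$ (choose which $\gamma n$ positions of $w$ to drop), so $p_w \le \binom{(1+\gamma)n}{\gamma n} q^{-n}$. Using $\binom{(1+\gamma)n}{\gamma n} = 2^{(1+\gamma)n\, H_2\!\left(\frac{\gamma}{1+\gamma}\right)(1+o(1))}$ and converting to base $q$, one gets $(1+\gamma)H_2\!\left(\frac{\gamma}{1+\gamma}\right)\log_q 2 = \log_q(1+\gamma) + \gamma\log_q\frac{1+\gamma}{\gamma}$, so $\log_q p_w \le -\bigl(1 - \log_q(\gamma+1) - \gamma\log_q\frac{\gamma+1}{\gamma}\bigr)n + o(n)$. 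There are at most $q^{(1+\gamma)n}\cdot (1+\gamma)n$ choices of $w$ (words of length at most $(1+\gamma)n$), and $\binom{q^{Rn}}{l+1}\le q^{R n(l+1)}$. Taking logs base $q$, the bad-event probability is at most $q$ to the power
\begin{equation*}
(1+\gamma)n + Rn(l+1) - (l+1)\Bigl(1 - \log_q(\gamma+1) - \gamma\log_q\tfrac{\gamma+1}{\gamma}\Bigr)n + o(n).
\end{equation*}
This is negative for large $n$ precisely when $R < 1 - \log_q(\gamma+1) - \gamma\log_q\frac{\gamma+1}{\gamma} - \frac{1+\gamma}{l+1}$, which is the hypothesis; so with high probability a random code is list-decodable as claimed.

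The main obstacle — and the only place requiring care — is making the counting of ``bad configurations'' rigorous: one must be sure the event ``$\mathcal{C}$ is not $(\gamma,0,l)$-list-decodable'' really is covered by the union over $(w; x_0,\dots,x_l)$ with all $x_i$ subsequences of $w$, and that the per-$w$ bound on the number of length-$n$ subsequences (hence on $p_w$) is correct; a clean way is to note that a violation of list-decodability exhibits some adversarial $w$ and $l+1$ codewords each within insertion-distance $\gamma n$ of $w$, and that it suffices to range $w$ over all strings of length exactly $(1+\gamma)n$ (a deletion-only channel from $x$ followed by $\gamma n$ insertions always yields such a $w$, possibly after padding, without decreasing the adversary's power). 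The independence of the $x_i$ across the $l+1$ slots gives the product form, and the rest is the entropy estimate above. I would also remark that the $o(n)$ terms only cost a sub-constant amount in the rate, which is why the stated threshold has the clean closed form and why super-constant (rather than constant) list size is not needed here — though if one wants the list size $l$ itself to be a constant, the $\frac{1+\gamma}{l+1}$ slack term is exactly what must be paid, matching the statement.
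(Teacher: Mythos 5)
Your proposal is correct and mirrors the paper's argument almost exactly: both use a first-moment union bound over received words $w$ of length $(1+\gamma)n$ and over $(l+1)$-tuples of messages, bound $\Pr_{x\sim\Sigma^n}[x \text{ is a subsequence of } w] \le \binom{(1+\gamma)n}{n}q^{-n}$, apply the binary-entropy estimate, and read off the rate threshold from the sign of the exponent. The only cosmetic differences are that you explicitly justify restricting to $w$ of length exactly $(1+\gamma)n$ and count distinct $(l+1)$-subsets via $\binom{q^{Rn}}{l+1}$ rather than all tuples $\left(q^{Rn}\right)^{l+1}$; neither changes the bound.
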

\section{Definitions and Preliminaries}\label{sec:prelim}
\subsection{Synchronization Strings}
In this section, we briefly recapitulate synchronization strings, introduced by Haeupler and Shahrasbi~\cite{haeupler2017synchronization} and further studied in~\cite{haeupler2017synchronization2:ARXIV, haeupler2017synchronization3}. We will review important definitions and techniques from~\cite{haeupler2017synchronization} that will be of use throughout this paper.

Synchronization strings are recently introduced mathematical objects that turn out to be useful tools to overcome synchronization errors, i.e., symbol insertion and symbol deletion errors. The general idea employed in~\cite{haeupler2017synchronization, haeupler2017synchronization2:ARXIV} to obtain resilience against synchronization errors in various communication setups is  \emph{indexing} each symbol of the communication with symbols of a synchronization string and then \emph{guessing} the actual position of received symbols on the other side using indices. \cite{haeupler2017synchronization} provides a variety of different guessing strategies that guarantee a large number of correct guesses and then overcome the incorrect guesses by utilizing classic error correcting codes. As a matter of fact, synchronization strings essentially translate synchronization errors into ordinary Hamming type errors which are strictly easier to handle. We now proceed to review some of the above-mentioned definitions and techniques more formally.

Suppose that two parties are communicating over a channel that suffers from $\alpha$-fraction of insertions and deletions and one of the parties sends a pre-shared string $S$ of length $n$ to the other one. A distorted version of $S$ will arrive at the receiving end that we denote by $S'$. A symbol $S[i]$ is called to be a \emph{successfully transmitted} symbol if it is not removed by the adversary. A \emph{decoding algorithm} in the receiving side is an algorithm that, for any received symbol, guesses its actual position in $S$ by either returning a number in $[1..n]$ or $\top$ which means the algorithm is not able to guess the index. For such a decoding algorithm, a successfully transmitted symbol whose index is not guessed correctly by the decoding algorithm is called a \emph{misdecoding}.

Haeupler and Shahrasbi~\cite{haeupler2017synchronization} introduce synchronization strings and find several decoding algorithms for them providing strong misdecoding guarantees and then design insertion-deletion codes based on those decoding algorithms. As details of those algorithms are not relevant to this paper we avoid further discussion of those techniques. To conclude this section we introduce $\eps$-synchronization strings and an important property of them.

\begin{definition}[$\eps$-synchronization string]\label{def:synCode}
String $S \in \Sigma^n$ is an $\eps$-synchronization string if for every $1 \leq i < j < k \leq n + 1$ we have that $ED\left(S[i, j),S[j, k)\right) > (1-\eps) (k-i)$. 
\end{definition}

The key idea in the construction of insdel codes in~\cite{haeupler2017synchronization} is to index an error correcting code with a synchronization string. Here we provide a formal definition of indexing operation.
\begin{definition}[Indexing]
The operation of indexing code $\mathcal{C}$ with block length $n$ and String $S$ of length $n$ is to simply replace each codeword $w_1,w_2,\cdots, w_n$ with $(w_1, s_1), (w_2, s_2), \cdots, (w_n, s_n)$. Clearly, this operations expands the alphabet of the code.
\end{definition}

It is shown in~\cite{haeupler2017synchronization, haeupler2017synchronization3} that $\eps$-synchronization strings exist over alphabets of sizes polynomially large in terms of $\eps^{-1}$ and can be efficiently constructed.
An important property of $\eps$-synchronization strings discussed in~\cite{haeupler2017synchronization} is the self matching property defined as follows.

\begin{definition}[$\eps$-self-matching property]
String $S$ satisfies $\eps$-self-matching property if for any two sequences of indices $1\leq a_1 < a_2 < \cdots < a_k\le |S|$ and $1\leq b_1 < b_2 < \cdots < b_k\le |S|$ that satisfy $S[a_i]=S[b_i]$ and $a_i \neq b_i$, $k$ is not larger than $\eps|S|$.
\end{definition}

In the end, we review the following theorem from~\cite{haeupler2017synchronization} that shows the close connection between synchronization string property and the self-matching property.

\begin{theorem}[Theorem 6.4 from~\cite{haeupler2017synchronization}]\label{thm:selfMatchingProperty}
If $S$ is an $\eps$-synchronization string, then all substrings of $S$ satisfy $\eps$-self-matching property.
\end{theorem}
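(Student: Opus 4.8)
First I would reduce the statement to showing that $S$ itself satisfies the $\eps$-self-matching property. Indeed, every substring $T=S[l,r)$ of an $\eps$-synchronization string is again an $\eps$-synchronization string: for any $1\le i'<j'<k'\le |T|+1$, setting $(i,j,k)=(l+i'-1,l+j'-1,l+k'-1)$ gives a valid triple for $S$ and $ED\big(T[i',j'),T[j',k')\big)=ED\big(S[i,j),S[j,k)\big)>(1-\eps)(k-i)=(1-\eps)(k'-i')$. So it suffices to prove: if $S\in\Sigma^n$ is an $\eps$-synchronization string then any self-matching of $S$ has size at most $\eps n$. The next step is to restate the hypothesis in terms of common subsequences. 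Using $ED(x,y)=|x|+|y|-2\,\mathrm{LCS}(x,y)$, the condition $ED(S[i,j),S[j,k))>(1-\eps)(k-i)$ is equivalent to $\mathrm{LCS}\big(S[i,j),S[j,k)\big)<\tfrac{\eps}{2}(k-i)$ for all $1\le i<j<k\le n+1$; that is, no two consecutive intervals of $S$ share a long common subsequence.

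Now suppose for contradiction that $S$ admits a self-matching $(a_1,b_1),\dots,(a_k,b_k)$ with $a_\ell\neq b_\ell$ for all $\ell$ and $k>\eps n$. Partitioning the pairs according to the sign of $b_\ell-a_\ell$ and discarding the smaller part, I may assume at least $k/2$ pairs satisfy $a_\ell<b_\ell$; after reindexing, call them $(a_1,b_1),\dots,(a_m,b_m)$ with $m\ge k/2$, $a_1<\dots<a_m$, $b_1<\dots<b_m$, and $a_\ell<b_\ell$. The key observation is that any ``pivot'' value $j$ together with the set of pairs straddling it, $P_j=\{\ell:\ a_\ell< j\le b_\ell\}$, yields a common subsequence of the two consecutive intervals $S[a_{\min},j)$ and $S[j,b_{\max}+1)$ (min/max over $P_j$), since both $(a_\ell)$ and $(b_\ell)$ are increasing in $\ell$; hence $|P_j|<\tfrac{\eps}{2}\big(b_{\max}-a_{\min}+1\big)$. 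In the favorable case where the index ranges $[a_1,a_m]$ and $[b_1,b_m]$ are disjoint, i.e.\ $a_m<b_1$, the single pivot $j=b_1$ straddles all $m$ pairs, giving $m<\tfrac{\eps}{2}n$ and therefore $k\le 2m<\eps n$, a contradiction.

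The main obstacle is the general case, where these two index ranges interleave (this genuinely occurs --- e.g.\ a string with $S[1]=S[100]$ and $S[150]=S[160]$ can still be a synchronization string), so no single pivot captures all $m$ forward pairs; moreover, crudely averaging $|P_j|$ over all $j$ only gives $m\lesssim\eps n^2$, which is useless. To recover the tight bound I would split the forward pairs by offset $d_\ell=b_\ell-a_\ell$. A pair of offset $d$ already forces $d\gtrsim 1/\eps$ via the short triple $(a_\ell,a_\ell+d,a_\ell+2d)$ (whose two blocks share the symbol $S[a_\ell]=S[a_\ell+d]$), and, more usefully, if many matched pairs of offset $\Theta(d)$ are confined to a window of width $O(d)$ then a pivot inside that window produces a common subsequence of two consecutive length-$O(d)$ intervals that is too long, so such a window holds only $O(\eps d)$ of them. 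Summing a greedy left-to-right decomposition of the forward pairs into such windows --- peeling off at pivot $b_{\text{first}}$ the block of pairs whose $a$-index lies to its left and recursing on the strictly shorter remaining range --- should yield $m\le\tfrac{\eps}{2}n$, hence $k\le\eps n$. I expect the delicate point to be organizing this decomposition so that the interval lengths charged at successive steps sum to at most $n$ (avoiding double-counting the overlapping index ranges) and so that no $\log n$ factor is incurred; this bookkeeping, rather than any single inequality, is the heart of the argument.
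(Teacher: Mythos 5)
This theorem is quoted from \cite{haeupler2017synchronization} without proof, so there is no in-paper argument to compare against; I will assess your plan on its own merits.

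Your opening moves are fine: the reduction to $S$ itself is correct (substrings of $\eps$-synchronization strings are $\eps$-synchronization strings), and the reformulation of the synchronization condition as $\mathrm{LCS}(S[i,j),S[j,k))<\tfrac{\eps}{2}(k-i)$ is exactly the right lever. The single-pivot observation is also correct.

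The gap is precisely where you say ``this bookkeeping, rather than any single inequality, is the heart of the argument'' --- you have not supplied it, and the version you sketch does not close. Concretely, run the greedy peeling you propose on the forward pairs: group $G_t$ consists of consecutive pairs $\ell_t,\dots,r_t$, with pivot $\beta_t=b_{\ell_t}$, interval $I_t=[\alpha_t,\gamma_t+1)$ where $\alpha_t=a_{\ell_t}$ and $\gamma_t=b_{r_t}$. The stopping rule gives $\alpha_{t+1}\ge\beta_t$, and monotonicity of $b$ gives $\beta_{t+1}>\gamma_t$. The second inequality shows $I_t$ and $I_{t+2}$ are disjoint, but the first only gives $\alpha_{t+1}\ge\beta_t\le\gamma_t$, so $I_t$ and $I_{t+1}$ genuinely can overlap. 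Hence each point of $[1,n]$ is covered by at most two intervals, i.e.\ $\sum_t|I_t|\le 2n$ and not $\le n$ as you hope. Combined with $|G_t|<\tfrac{\eps}{2}|I_t|$ this yields $m<\eps n$ for the forward half, hence $k\le 2m<2\eps n$ --- a factor of 2 short of the claim.

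The fix is that you should not split into forward and backward pairs at all; that halving is exactly where the lost factor comes from. Set $\ell_i=\min(a_i,b_i)$ and $r_i=\max(a_i,b_i)$. Both sequences $(\ell_i)$ and $(r_i)$ are strictly increasing (if $r_i=a_i$ then $r_{i+1}\ge a_{i+1}>a_i$; if $r_i=b_i$ then $r_{i+1}\ge b_{i+1}>b_i$; similarly for $\ell$), one has $\ell_i<r_i$, and $S[\ell_i]=S[r_i]$. So the entire self-matching is already a ``forward'' monotone matching after relabeling, and the very same peeling applied to it, together with the $\le 2n$ bound on total interval length, gives $k<\tfrac{\eps}{2}\cdot 2n=\eps n$ in one shot. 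The offset-based dyadic decomposition you float as an alternative is unnecessary and would, as you fear, tend to pick up a $\log n$ factor unless amortized with additional care.
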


\subsection{List Recoverable Codes}
A code $\mathcal{C}$ given by the encoding function $\mathcal{E}:\Sigma^{nr}\rightarrow\Sigma^n$ is called to be $(\alpha, l, L)$-list recoverable if for any collection of $n$ sets $S_1, S_2, \cdots, S_n\subset\Sigma$ of size $l$ or less, there are at most $L$ codewords for which more than $\alpha n$ elements appear in the list that corresponds to their position, i.e.,  $$\left|\left\{x \in \mathcal{C}\mid \left|\left\{i \in [n] \mid x_i \in S_i\right\}\right| \geq \alpha n \right\}\right| \leq L.$$

The study of list-recoverable codes was inspired by Guruswami and Sudan's list-decoder for Reed-Solomon codes~\cite{guruswami1998improved}. Since then, list-recoverable codes have became a very useful tool in coding theory~\cite{guruswami2001expander, guruswami2002near, guruswami2003linear, guruswami2004linear} and there have been a variety of constructions provided for them by several works~\cite{guruswami2008explicit, guruswami2011optimal, guruswami2013list, kopparty2015list, hemenway2015linear, GuruswamiXing2017, hemenway2017local}.
In this paper, we will make use of the following capacity-approaching polynomial-time list-recoverable codes given by Hemenway, Ron-Zewi, and Wootters~\cite{hemenway2017local} that is obtained by altering the approach of Guruswami and Xing~\cite{guruswami2013list}.

\begin{theorem}[Hemenway et. al.~{\cite[Theorem A.7]{hemenway2017local}}]\label{thm:PolyTimeListRecoverable}
Let $q$ be an even power of a prime, and choose $l, \epsilon > 0$, so that $q \geq \epsilon^{-2}$. Choose $\rho \in (0, 1)$. There is an $m_{min} = O(l \log_q(l/\epsilon)/\epsilon^2)$ so that the following holds for all $m \geq m_{min}$. For infinitely many $n$ (all $n$ of the form $q^{e/2}(\sqrt q - 1)$ for any integer $e$), there is a deterministic polynomial-time construction of an $F_q$-linear code $C:\mathbb{F}^{\rho n}_{qm} \rightarrow \mathbb{F}^n_{qm}$ of rate $\rho$ and relative distance $1-\rho - O(\epsilon)$ that is $(1 - \rho - \epsilon, l, L)$-list-recoverable in time $\poly(n, L)$, returning a list that is
contained in a subspace over $\mathbb{F}_q$ of dimension at most 
$\left(\frac{l}{\epsilon}\right)^{2\log^*(mn)}$.
\end{theorem}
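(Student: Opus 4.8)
This is Theorem A.7 of Hemenway, Ron-Zewi and Wootters~\cite{hemenway2017local}, obtained by adapting the folded algebraic-geometry (AG) code construction of Guruswami and Xing~\cite{guruswami2013list}; here is the route one would take. The plan is two-staged. First, build a base AG code that is capacity-achieving for list recovery and whose decoder returns the set of candidate messages not as an unstructured list but as a low-dimensional subspace over $\mathbb{F}_q$. Second, apply an iterated pre-encoding that drives the dimension of that subspace --- hence the list size --- down to $(l/\epsilon)^{2\log^*(mn)}$ while preserving $\mathbb{F}_q$-linearity, the rate $\rho$, the relative distance $1-\rho-O(\epsilon)$, and polynomial-time decoding.

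For the base stage I would instantiate folded AG codes over the Garcia--Stichtenoth tower over $\mathbb{F}_q$, with $q$ an even power of a prime. The hypothesis $q \ge \epsilon^{-2}$ is exactly what makes the ratio of rational places to genus large enough (of order $\epsilon^{-1}$) to list-recover up to radius $\rho+\epsilon$; the admissible block lengths $n = q^{e/2}(\sqrt q - 1)$ are precisely the numbers of rational places at the levels of the tower. Running the linear-algebraic interpolate-and-retrieve list decoder on the $m$-folded code shows that the resulting $\mathbb{F}_q$-linear code of rate $\rho$ and relative distance $1-\rho-O(\epsilon)$ is $(1-\rho-\epsilon,l,L)$-list-recoverable, with the valid messages confined to a subspace of $\mathbb{F}_{q^m}^{\rho n}$ of $\mathbb{F}_q$-dimension $D_0 = (l/\epsilon)^{O(1)}$, all in $\poly(n)$ time. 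The requirement $m \ge m_{min} = O(l\log_q(l/\epsilon)/\epsilon^2)$ enters precisely here: $m$ must be large enough that the genus term of the function field is absorbed and the agreement parameter still exceeds $\rho+\epsilon$ --- the AG analogue of requiring the folding parameter to dominate the genus-to-length ratio, with the extra $l$-dependence arising because we do list recovery rather than list decoding.

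For the second stage the obstacle is that $D_0$ is too large to be the final list size, and one cannot simply intersect with a generic subspace-evasive set without losing polynomial-time decodability or inflating other parameters. Instead I would recurse: restrict the message coordinates to lie in the image of another code from the same family, whose block length is a fixed polynomial in the dimension output by the current level. Decoding then runs top-down --- each level outputs a subspace, one intersects it with the next level's message constraints and recurses --- and since the block length, and hence the per-level subspace dimension, collapses roughly like an iterated logarithm of $mn$, after $O(\log^*(mn))$ levels the surviving dimension is an absolute constant. Unwinding the recursion, the top-level list lies in a subspace over $\mathbb{F}_q$ of dimension at most the product of the $O(\log^*(mn))$ per-level factors, each of size $(l/\epsilon)^{O(1)}$, which yields the claimed bound $(l/\epsilon)^{2\log^*(mn)}$; each level costs a $(1-O(\epsilon/\log^* n))$ factor in rate, so after rescaling $\epsilon$ by a $\log^*$ factor the rate is still $\rho$ and the distance still $1-\rho-O(\epsilon)$, and the total running time stays $\poly(n,L)$ because each of the $\log^*(mn)$ levels runs the poly-time AG list decoder together with a linear-algebra step on subspaces of dimension at most $L$.

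The step I expect to be the real work is making the second stage quantitatively tight: ensuring the recursion's block lengths remain of the admissible form $q^{e/2}(\sqrt q - 1)$ at every level, that the per-level rate and distance losses telescope to only $O(\epsilon)$ overall, and that the nested subspace intersections both run in polynomial time and give exactly the $(l/\epsilon)^{2\log^*(mn)}$ bound rather than something larger. The AG-specific genus bookkeeping behind the precise form of $m_{min}$ is a secondary but unavoidable technicality.
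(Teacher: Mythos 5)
The paper does not prove this statement: Theorem~\ref{thm:PolyTimeListRecoverable} is quoted verbatim, with attribution, as Theorem~A.7 of Hemenway, Ron-Zewi and Wootters~\cite{hemenway2017local}, and is used as a black box in the proof of Theorem~\ref{thm:InsDelListDecoding}. There is therefore no ``paper's own proof'' to compare your attempt against; your task for this statement is to assess whether the citation is being used correctly, not to re-derive the cited result.

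That said, your reconstruction is a plausible high-level account of the cited construction, and you correctly identify the two stages (folded AG codes over the Garcia--Stichtenoth tower yielding a low-dimensional affine-subspace output, then an iterated pre-coding that drives the effective dimension down to $(l/\epsilon)^{2\log^*(mn)}$) and the role of $q \ge \epsilon^{-2}$, the admissible block lengths $q^{e/2}(\sqrt q - 1)$, and the lower bound on $m$. The one place your sketch diverges from what Guruswami--Xing~\cite{guruswami2013list} and the Hemenway et al.\ adaptation actually do is the mechanism of the second stage: the pruning is not performed by recursively restricting messages to the image of a smaller code from the same AG family, but by pre-coding with (cascaded) \emph{subspace designs} --- explicit families of $\mathbb{F}_q$-subspaces with small pairwise intersections --- chosen to interact with the periodic subspace structure of the folded-AG list decoder's output. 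The $\log^*$ in the exponent comes from the cascaded construction of these subspace designs, where each level's design has size roughly logarithmic in the previous level's, not from shrinking block lengths of nested AG codes. Your intuition for why $\log^*$ levels suffice and why each level multiplies the list size by a $\poly(l/\epsilon)$ factor is sound; only the identity of the combinatorial object being iterated is off.
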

 

\section{List Decoding for Insertions and Deletions}\label{sec:InsDelListDecoding}

In this section, we prove Theorem~\ref{thm:InsDelListDecoding} by constructing a list-decodable code of rate $1-\delta-\eps$ that provides resilience against $0<\delta<1$ fraction of deletions and $\gamma$ fraction of insertions over a constant-sized alphabet.
Our construction heavily relies on the following theorem that, in the same fashion as~\cite{haeupler2017synchronization}, uses the technique of indexing an error correcting code with a synchronization string to convert a given list-recoverable code into an insertion-deletion code.

\begin{theorem}\label{thm:IndexingBlackBox}
Let $\mathcal{C}:\Sigma^{nR}\rightarrow\Sigma^n$ be a $(\alpha, l, L)$-list recoverable code with rate $R$, encoding complexity $T_{Enc}$ and decoding complexity complexity $T_{Dec}$. For any $\eps>0$ and $\gamma \leq \frac{l\eps}{2}-1$, by indexing $\mathcal{C}$ with an $\frac{\eps^2}{4(1+\gamma)}$-synchronization string, one can obtain an insertion-deletion code $\mathcal{C}': \Sigma^{nr}\rightarrow[\Sigma\times\Gamma]^n$ that corrects from $\delta < 1-\alpha-\eps$ fraction of deletions and $\gamma$ fraction of insertions where $|\Gamma| = \left({\eps^2}/{(1+\gamma)}\right)^{-O(1)}$. $\mathcal{C}'$ is encodable and decodable in $O(T_{Enc}+n)$ and $O(T_{Dec}+n^2(1+\gamma^2)/\eps)$ time respectively.
\end{theorem}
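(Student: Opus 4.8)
The plan is to follow the "index-and-decode" paradigm of Haeupler--Shahrasbi, but replacing their use of a list-*decodable* (really uniquely-decodable) inner code with the list-*recoverable* code $\mathcal{C}$ given in the hypothesis. Fix an $\frac{\eps^2}{4(1+\gamma)}$-synchronization string $S$ of length $n$ over an alphabet $\Gamma$ of the stated size; such an $S$ exists and is efficiently constructible by the results cited in Section~\ref{sec:prelim}. The code $\mathcal{C}'$ encodes a message by first applying $\mathcal{C}$ to get a codeword $(w_1,\dots,w_n)\in\Sigma^n$, then indexing it with $S$ to produce $\big((w_1,S[1]),\dots,(w_n,S[n])\big)\in[\Sigma\times\Gamma]^n$. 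Encoding is clearly $O(T_{Enc}+n)$. The content of the theorem is the decoding guarantee, so the bulk of the proof is the decoder and its analysis.

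The decoder works as follows. Given a received word $w'$ of length $n' \le (1-\delta+\gamma)n \le (1+\gamma)n$, we run a position-guessing algorithm from~\cite{haeupler2017synchronization} on the $\Gamma$-components of $w'$: for each received symbol, the algorithm either outputs a guessed index in $[1..n]$ or $\top$. Concretely I would use the algorithm that, for each received symbol, looks at its $\Gamma$-component and (via the self-matching property) produces for each position $i \in [n]$ a small *candidate set* $S_i \subseteq \Sigma$ consisting of the $\Sigma$-components of all received symbols that could plausibly have come from position $i$. The key quantitative step is to bound $|S_i| \le l$ for all $i$, and to show that for all but an $\eps$-fraction of the positions $i$, the correct symbol $w_i$ lies in $S_i$. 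This is where the synchronization-string parameters enter: by Theorem~\ref{thm:selfMatchingProperty}, $S$ has the $\frac{\eps^2}{4(1+\gamma)}$-self-matching property on all substrings, so the total number of "spurious" index-matches across the whole received string is at most $\frac{\eps^2}{4(1+\gamma)}\cdot(1+\gamma)n \cdot (\text{const})$; dividing by how many spurious matches a single position can absorb, and using $\gamma \le \frac{l\eps}{2}-1$ to control the list size, gives both $|S_i|\le l$ and the "$\ge (1-\eps)n$ positions are good" bound. Once this is established, we feed $(S_1,\dots,S_n)$ into the $(\alpha,l,L)$-list-recovery algorithm of $\mathcal{C}$: since the transmitted codeword agrees with the lists on at least $(1-\delta-\eps)n \ge \alpha n$ positions (because $\delta < 1-\alpha-\eps$), the list-recovery algorithm returns a list of $\le L$ codewords containing the transmitted one. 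Running time is $T_{Dec}$ for the list-recovery call plus $O(n^2(1+\gamma^2)/\eps)$ for the position-guessing / self-matching computation, matching the claim.

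The main obstacle — and the step I would spend the most care on — is the quantitative bookkeeping that simultaneously (i) keeps every candidate set $S_i$ of size at most $l$ and (ii) guarantees at least $(1-\eps)n$ "good" positions, using only the $\frac{\eps^2}{4(1+\gamma)}$-self-matching guarantee and the constraint $\gamma \le \frac{l\eps}{2}-1$. The subtlety is that the adversary can insert up to $\gamma n$ symbols, so a naive matching could overload a few positions; one must argue that because $S$ is a synchronization string, any long set of "wrong" index-assignments forms a self-matching of $S$ (or of a substring), hence has size $\le \frac{\eps^2}{4(1+\gamma)}|S|$, and then balance this global budget against the per-position list cap. I would set it up by charging each received symbol that lands in $S_i$ for $i$ not equal to its true source position to a self-matching edge, bounding the total such charge by $\frac{\eps^2}{4(1+\gamma)}\cdot(1+\gamma)n = \frac{\eps^2 n}{4}$, and then noting that this forces all but at most $\frac{\eps n}{2}$ positions to have at most $\frac{2}{\eps}\cdot\frac{\eps^2 n}{4}/n \le$ (a bounded number) of spurious entries — tuning constants so the per-position bound is $l-1$ spurious plus $1$ correct, giving $|S_i|\le l$, and the count of bad positions is $\le \eps n$. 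The rest (feeding into list recovery, verifying $\alpha n$ agreement, tallying running time and alphabet size) is routine.
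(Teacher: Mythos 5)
Your overall architecture is the same as the paper's: index $\mathcal{C}$ with an $\eps' = \frac{\eps^2}{4(1+\gamma)}$-synchronization string, turn the $\Gamma$-components of the received word into per-position candidate sets $S_1,\dots,S_n \subseteq \Sigma$, then invoke the $(\alpha,l,L)$-list-recovery algorithm. Where the proposal falls short is exactly in the step you flag as the hardest one: the construction and analysis of the $S_i$. You never commit to a concrete position-guessing algorithm, and the charging/bookkeeping you sketch in its place does not work.

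The paper uses a specific algorithm (the ``global'' decoder, Theorem 6.14 of~\cite{haeupler2017synchronization}): compute $K$ successive longest common subsequences $M_1,\dots,M_K$ between the index string $S$ and the received index string $S'$, removing matched symbols of $S'$ after each round; $A_i$ is then the set of $\Sigma$-components of received symbols matched to $S[i]$ in some round. Three things follow cleanly from this structure, none of which your sketch reproduces. First, $|A_i| \le K$ by construction (one candidate per round), and setting $K = 2(1+\gamma)/\eps$ gives $|A_i| \le l$ under $\gamma \le \frac{l\eps}{2}-1$; the list-size bound has nothing to do with the self-matching budget. Second, the ``correct symbol transmitted but never matched'' positions number at most $|S'|/K \le (1+\gamma)n/K$, because $p$ unmatched transmitted symbols form a common subsequence of size $p$ with the residual $S'$, forcing $|M_1|+\dots+|M_K| \ge Kp$ while also $\le |S'|$. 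Your proposal has no analogue of this bound, and without it you cannot get $(1-\delta-\eps)n$ good positions. Third, the ``transmitted but mismatched'' positions number at most $Kn\eps'$: \emph{within a single monotone matching} $M_t$, each mismatch implies a pair of equal symbols $S[i]=S[k]$, and the monotonicity of both $M_t$ and the true alignment forces these pairs to form a self-matching of $S$, which is $\le n\eps'$ by Theorem~\ref{thm:selfMatchingProperty}. Your global charging scheme skips the per-round monotone structure, so there is no reason the collection of all spurious candidates should form a single self-matching; your resulting bound $\eps'(1+\gamma)n = \frac{\eps^2 n}{4}$ is not justified. (The paper's correct total is $K n\eps' = \frac{\eps n}{2}$, a different quantity.) Finally, your Markov step produces $\frac{2}{\eps}\cdot\frac{\eps^2 n/4}{n} = \eps/2 < 1$, which cannot be read as ``$l-1$ spurious entries per position''; the arithmetic does not yield the per-position list bound you need.

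In short: the plan is right, but the heart of the theorem is precisely the $K$-round LCS decoder and its three-way accounting (deleted / unmatched / mismatched), and the proposal replaces that with a single global charge to the self-matching property, which is both unjustified and quantitatively inconsistent.
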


We take two major steps to prove Theorem~\ref{thm:IndexingBlackBox}.
In the first step (Theorem~\ref{thm:InsDelListDecodingStep1}), we use the synchronization string indexing technique from~\cite{haeupler2017synchronization} and show that by indexing the symbols that are conveyed through an insertion-deletion channel with symbols of a synchronization string, the receiver can make \emph{lists} of candidates for any position of the sent string such that $1-\delta-\eps$ fraction of lists are guaranteed to contain the actual symbol sent in the corresponding step and the length of the lists is guaranteed to be smaller than some constant $O_{\gamma, \eps}(1)$. 

In the second step, we use list-recoverable codes on top of the indexing scheme to obtain a list decoding using lists of candidates for each position produced by the former step.

We start by the following lemma that directly implies the first step stated in Theorem~\ref{thm:InsDelListDecodingStep1}.

\begin{lemma}\label{lem:globalDecodingListSize}
Assume that a sequence of $n$ symbols denoted by $x_1x_2\cdots x_n$ is indexed with an $\eps$-synchronization string and is communicated through a channel that suffers from up to $\delta n$ deletions for some $0 \le \delta < 1$ and $\gamma n$ insertions. Then, on the receiving end, it is possible to obtain $n$ lists $A_1, \cdots, A_n$ such that, for any desired integer $K$, for at least $n\cdot\left(1-\delta - \frac{1+\delta}{K} - K\cdot \eps\right)$ of them, $x_i\in A_i$. All lists contain up to $K$ elements and the average list size is at most $1+\gamma$. These lists can be computed in $O\left(K(1+\gamma) n^2\right)$ time.
\end{lemma}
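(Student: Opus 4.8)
The plan is to adapt the "guessing" strategy of Haeupler--Shahrasbi, but instead of committing to a single guessed position for each received symbol, we keep a short list of plausible positions and then transpose this into lists indexed by the positions of the sent string. Concretely, write $w' = (w'_1, s'_1)(w'_2,s'_2)\cdots(w'_m, s'_m)$ for the received word, where $m \le (1+\gamma)n$ and each $s'_t$ is a (possibly spurious) synchronization-string symbol. For each received index $t$ I would run a decoding procedure that produces a set $G_t \subseteq [1..n]$ of at most $K$ candidate source positions, designed so that whenever $w'_t$ is a successfully transmitted symbol that originated from position $i$, we have $i \in G_t$ for all but a small fraction of such $t$. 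Then set $A_i \triangleq \{\, w'_t : i \in G_t \,\}$ (truncated to $K$ elements if necessary). The symbol $x_i$ fails to land in $A_i$ only if the (unique) received copy of $x_i$ either was deleted, or was received at some index $t$ but $i \notin G_t$ (a "misdecoding" in the list sense), or was received but $A_i$ overflowed past $K$ entries before including it; I would bound the total number of such bad $i$.

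The heart of the argument is the construction of the $G_t$'s and the misdecoding bound. Here I would use the self-matching property (Theorem~\ref{thm:selfMatchingProperty}): any monotone matching between the received synchronization symbols $s'_1\cdots s'_m$ and the sent string $S$ that is not "essentially the identity alignment" uses at most $\eps$-fraction of edges in the relevant window, because off-diagonal agreements between $S$ and itself are scarce. The true alignment matches the $\ge (1-\delta)n$ successfully transmitted symbols monotonically to their true positions. For the list version I want, for each $t$, to collect every source position $i$ that is consistent with $s'_t = S[i]$ under \emph{some} monotone alignment of a long enough prefix/suffix; by a counting argument over all such near-monotone matchings, the number of $(t,i)$ pairs where $i$ is a "false" candidate is $O(\eps n)$ times the number of matchings considered, and choosing the list-length cutoff $K$ trades off the number of candidates we retain against how often the true candidate is evicted. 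This is where the clean statement "for at least $n(1-\delta - \frac{1+\delta}{K} - K\eps)$ indices, $x_i \in A_i$" should fall out: the $\delta n$ term is deletions, the $\frac{1+\delta}{K}n$ term accounts for overflowing lists (there are at most $m \le (1+\gamma)n$ received symbols, and at most $(1+\delta)n/K \ge m/K$ … roughly … of the $A_i$ can have been filled to capacity by genuine symbols), and the $K\eps n$ term is the total list-misdecoding budget, $K$ per received symbol across the $\eps$-self-matching bound.

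For the size and complexity claims: each $A_i$ has at most $K$ elements by construction; and $\sum_i |A_i| \le \sum_t |G_t| $ — wait, that gives $K(1+\gamma)n$, so for the "average list size at most $1+\gamma$" bound I instead want each received symbol to contribute to only $O(1)$ lists on average, i.e.\ I should design $G_t$ so that $\sum_t |G_t| \le (1+\gamma)n$ in aggregate (e.g.\ most $G_t$ are singletons, with the budget for larger $G_t$ controlled by the matching count), rather than allowing every $G_t$ to have size $K$; the cutoff $K$ is then only a worst-case cap. The time bound $O(K(1+\gamma)n^2)$ comes from, for each of the $O((1+\gamma)n)$ received symbols, running an $O(Kn)$-style dynamic program over source positions to compute $G_t$. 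The main obstacle I anticipate is making the two bullet points on list sizes simultaneously tight — a naive "keep all consistent positions" rule blows up the average size, while a naive "keep one position" rule is exactly the unique-decoding scheme and loses the list-decoding gain; the right construction must let $G_t$ grow only where the self-matching structure forces genuine ambiguity, and the bookkeeping that the extra candidates are "paid for" by the $\eps$-self-matching bound (so that both $\max|A_i| \le K$ and $\operatorname{avg}|A_i| \le 1+\gamma$ hold) is the delicate part.
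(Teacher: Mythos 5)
Your proposal sets up the right transposition---build candidate source-position sets on the receiver side and then read off $A_i$---but it is missing the one concrete construction that makes all four bounds in the lemma fall out simultaneously, and you flag this yourself. The paper's algorithm does \emph{not} work per received symbol $t$. It works globally: compute a longest common subsequence $M_1$ between $S$ and the received index string $S'$, delete the $S'$-symbols matched in $M_1$, compute a longest common subsequence $M_2$ between $S$ and what remains, and iterate $K$ times; then $A_i$ is the set of $S'$-symbols matched to position $i$ across $M_1,\dots,M_K$. This choice makes both list-size bounds trivial, which is exactly where your sketch gets stuck: $|A_i|\le K$ holds because each $M_k$ is a matching and so contributes at most one mate to each $i$; and $\sum_i |A_i|=\sum_k |M_k|\le |S'|\le (1+\gamma)n$ holds because matched $S'$-symbols are removed between rounds, so the edges across all rounds are disjoint in $S'$. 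In your per-$t$ framework, the quantity $\sum_t |G_t|$ has no such a priori cap, and ``truncate $A_i$ to $K$ if necessary'' risks discarding the true candidate; you correctly identify this tension but do not resolve it, and I don't see how to resolve it without effectively reinventing the iterated-LCS structure.

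Your error-budget decomposition is also off in one place. The $\delta n$ term (deletions) and the $K\eps n$ term (each $M_k$, being monotone, disagrees with the true monotone alignment in a self-matching of $S$, so contributes at most $\eps n$ misplacements by Theorem~\ref{thm:selfMatchingProperty}) match the paper in spirit. But the middle term is \emph{not} an ``overflow'' count---there is no overflow in the paper's scheme. It counts successfully-transmitted symbols that were never matched in any round: if $p$ such symbols remain after $K$ rounds, they themselves form a common subsequence of $S$ and the leftover $S'$, so $|M_k|\ge p$ for every $k$ (the LCS sizes are nonincreasing and each is at least $p$), hence $Kp\le\sum_k|M_k|\le|S'|\le(1+\gamma)n$, i.e.\ $p\le(1+\gamma)n/K$. (Note this gives $\frac{1+\gamma}{K}$, not the $\frac{1+\delta}{K}$ that appears in the lemma statement; the paper's own proof and its downstream use in Theorem~\ref{thm:InsDelListDecodingStep1} both carry $1+\gamma$, and your ``$(1+\delta)n/K\ge m/K$'' step is not valid when $\gamma>\delta$.) The runtime claim you give would also follow from the paper's construction ($K$ LCS computations on strings of length $\le(1+\gamma)n$), so once the construction is replaced, the rest of your outline can be salvaged.
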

\begin{proof}
The decoding algorithm we propose to obtain the lists that satisfy the guarantee promised in the statement is the global algorithm introduced in Theorem 6.14 of Haeupler and Shahrasbi~\cite{haeupler2017synchronization}.


Let $S$ be the $\eps$-synchronization string used for indexing and $S'$ be the index portion of the received string on the other end. Note that $S$ is pre-shared between the sender and the receiver. The decoding algorithm starts by finding a longest common substring $M_1$ between $S$ and $S'$ and adding the position of any matched element from $S'$ to the list that corresponds to its respective match from side $S$.
 Then, it removes every symbol that have been matched from $S'$ and repeats the previous step by finding another longest common subsequence $M_2$ between $S$ and the remaining elements of $S'$. This procedure is repeated $K$ times to obtain $M_1, \cdots, M_K$. This way, lists $A_i$ are formed by including every element in $S'$ that is matched to $S[i]$ in any of $M_1, \cdots, M_K$.

$A_i$ contains the actual element that corresponds to $S[i]$, denoted by $S'[j]$, if and only if $S[i]$ is successfully transmitted (i.e., not removed by the adversary), appears in one of $M_k$s, and matches to $S[i]$ in $M_k$. Hence, there are three scenarios under which $A_i$ does not contain its corresponding element $S[i]$. 
\begin{enumerate}
\item $S[i]$ gets deleted by the adversary. 
\item $S[i]$ is successfully transmitted but, as $S'[j]$ on the other side, it does not appear on any of $M_k$s.
\item $S[i]$ is successfully transmitted and, as $S'[j]$ on the other side, it appears in some $M_k$ although it is matched to another element of $S$.
\end{enumerate}

The first case happens for at most $\delta n$ elements as adversary is allowed to delete up to $\delta n$ many elements. 

To analyze the second case, note that sizes of $M_k$s descend as $k$ grows since we pick the longest common subsequence in each step. If by the end of this procedure $p$ successfully transmitted symbols are still not matched in any of the matchings, they form a common subsequence of size $p$ between $S$ and the remainder of $S'$. This leads to the fact that $|M_1| + \cdots + |M_K| \ge K\cdot p$. As $|M_1| + \cdots + |M_K|$ cannot exceed $|S'|$, we have $p \le |S'|/K$. This bounds above the number of symbols falling into the second category by $|S'|/K$.

Finally, as for the third case, we draw the reader's attention to the fact that each successfully transmitted $S[i]$ which arrives at the other end as $S'[j]$ and mistakenly gets matched to another element of $S$ like $S[k]$ in $M_t$, implies that $S[i] = S[k]$. We call the pair $(i, k)$ \emph{a pair of similar elements in $S$ implied by $M_t$}.
Note that there is an actual monotone matching from $S$ to $S'$ that corresponds to adversary's actions like $M'$. As $M_1$ and $M'$ are both monotone, the set of similar pairs in $S$ implied by $M_t$ is a self-matching in $S$. As stated in Theorem~\ref{thm:selfMatchingProperty}, the number of such pairs cannot exceed $n\eps$. Therefore, there can be at most $n\eps$ successfully transmitted symbols that get mistakenly matched in $M_1$. Same argument holds for the rest of $M_k$s, hence, the number of elements falling into the third category is at most $nK\eps$.

Summing up all above-mentioned bounds gives that the number of bad lists can be bounded above by\fullOnly{ the following.
$$n\delta + \frac{|S'|}{K} + nK\eps \le n\left(\delta + \frac{1+\gamma}{K} + K\eps\right)$$}
\shortOnly{$n\delta + \frac{|S'|}{K} + nK\eps \le n\left(\delta + \frac{1+\gamma}{K} + K\eps\right)$.}
This proves the list quality guarantee. As proposed decoding algorithm computes longest common substring $K$ many times between two strings of length $n$ and $(1+\gamma)n$ or less, it will run in $O(K(1+\gamma)\cdot n^2)$ time.
\end{proof}

\begin{theorem}\label{thm:InsDelListDecodingStep1}
Suppose that $n$ symbols denoted by $x_1, x_2, \cdots, x_n$ are being communicated through a channel suffering from up to $\delta n$ deletions for some $0 \le \delta < 1$ and $\gamma n$ insertions for some constant $\gamma\geq0$. If one indexes these symbols with an $\eps' = \frac{\eps^2}{4(1+\gamma)}$-synchronization string, then, on the receiving end, it is possible to obtain $n$ lists $A_1, \cdots, A_n$ such that for at least $n\cdot\left(1-\delta - \eps\right)$ of them $x_i\in A_i$. Each list contains up to $2(1+\gamma)/\eps$ elements and the average list size is at most $1+\gamma$. These lists can be computed in $O\left(n ^2(1+\gamma)^2/\eps\right)$ time.
\end{theorem}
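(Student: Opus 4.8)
The plan is to obtain Theorem~\ref{thm:InsDelListDecodingStep1} as a direct instantiation of Lemma~\ref{lem:globalDecodingListSize}, with a single good choice of the free parameter $K$. Concretely, I would apply the lemma with its synchronization-string parameter set to $\eps' = \frac{\eps^2}{4(1+\gamma)}$ (so that the string used to index $x_1,\dots,x_n$ is exactly the $\eps'$-synchronization string named in the theorem) and with $K$ taken to be (essentially) $2(1+\gamma)/\eps$. Everything else is inherited verbatim from the proof of the lemma: the decoding procedure (the ``global'' iterated-longest-common-substring algorithm of~\cite{haeupler2017synchronization}), the construction of the lists $A_1,\dots,A_n$ from the matchings $M_1,\dots,M_K$, and the three-case analysis of why a list can fail to contain its true symbol. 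No new combinatorial ingredient is needed; the whole content of this theorem is the parameter optimization.

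The one computation to carry out is that the ``bad-list'' fraction guaranteed by the lemma, namely $\delta + \frac{|S'|/n}{K} + K\eps'$, is at most $\delta + \eps$. Here $S'$ is the received index string, of length $|S'| \le (1-\delta+\gamma)n \le (1+\gamma)n$, so the middle term is at most $\frac{1+\gamma}{K}$. With $K \approx 2(1+\gamma)/\eps$ we get $\frac{1+\gamma}{K} \le \frac{\eps}{2}$ and $K\eps' = K\cdot\frac{\eps^2}{4(1+\gamma)} \le \frac{\eps}{2}$, i.e.\ the ``missed-match'' loss $\frac1K$ and the ``wrong-match'' loss proportional to $K\eps'$ have been balanced, and they sum to $\eps$. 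Hence at least $n(1-\delta-\eps)$ of the lists contain the correct symbol. (When $2(1+\gamma)/\eps$ is not an integer I would take $K = \lceil 2(1+\gamma)/\eps\rceil$ and absorb the resulting $O(1)$ slack into the constant hidden in $\eps'=\Theta(\eps^2/(1+\gamma))$.)

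The remaining assertions are then immediate from the lemma under this choice of $K$. Each list has at most $K \le 2(1+\gamma)/\eps$ elements --- this is precisely the source of the stated list-size bound. Since each element of $S'$ is matched in at most one $M_k$ and contributes to exactly one list, $\sum_i |A_i| \le |S'| \le (1+\gamma)n$, so the average list length is at most $1+\gamma$. Finally the running time $O(K(1+\gamma)n^2)$ from the lemma becomes $O\!\left(n^2(1+\gamma)^2/\eps\right)$.

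As for the main obstacle: there really is no conceptual difficulty here, since all the substance lives in Lemma~\ref{lem:globalDecodingListSize}. The only place needing a little care is confirming that, once $K$ is forced to be an integer, the two error terms genuinely sum to at most $\eps$; this is exactly what the slack in the constant factor inside $\eps'$ buys, and I would remark on it rather than chase optimal constants.
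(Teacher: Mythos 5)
Your proof is correct and is essentially identical to the paper's: the paper also proves Theorem~\ref{thm:InsDelListDecodingStep1} by invoking Lemma~\ref{lem:globalDecodingListSize} with $K = 2(1+\gamma)/\eps$ so that $\frac{1+\gamma}{K} = K\eps' = \eps/2$, and reads off the list-size, average-size, and runtime claims directly from the lemma. (You also correctly used $\frac{1+\gamma}{K}$ rather than the $\frac{1+\delta}{K}$ appearing in the lemma's statement, which is a typo in the paper; the lemma's own proof derives the $\frac{1+\gamma}{K}$ bound.)
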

\begin{proof}
Using an $\eps' = \frac{\eps^2}{4(1+\gamma)}$-synchronization string in the statement of Lemma~\ref{lem:globalDecodingListSize} and choosing $K = \frac{2(1+\gamma)}{\eps}$ directly gives that the runtime is $O\left(n ^2(1+\gamma)^2/\eps\right)$ and list hit ration is at least 
\shortOnly{$n\cdot\left(1-\delta - \frac{1+\gamma}{K} - K\cdot \eps'\right) = n\cdot(1-\delta-\eps/2-\eps/2) = n\cdot(1-\delta-\eps)$}
\fullOnly{$$n\cdot\left(1-\delta - \frac{1+\gamma}{K} - K\cdot \eps'\right) = n\cdot(1-\delta-\eps/2-\eps/2) = n\cdot(1-\delta-\eps)$$}
\end{proof}

Theorem~\ref{thm:InsDelListDecodingStep1} facilitates the conversion of list-recoverable error correcting codes into list-decodable insertion-deletion codes as stated in Theorem~\ref{thm:IndexingBlackBox}.

\begin{proof}[{\bf Proof of Theorem~\ref{thm:IndexingBlackBox}}]
To prove this, we simply index code $\mathcal{C}$, entry by entry, with an $\eps'=\frac{\eps^2}{4(1+\gamma)}$ synchronization string. In the decoding procedure, according to Theorem~\ref{thm:InsDelListDecodingStep1}, the receiver can use the index portion of the received symbol to maintain lists of up to $2(1+\gamma)/\eps \leq l$ candidates for each position of the sent codeword of $\mathcal{C}$ so that $1-\delta-\eps > \alpha$ fraction of those contain the actual corresponding sent message. Having such lists, the receiver can use the decoding function of $\mathcal{C}$ to obtain an $L$-list-decoding for $\mathcal{C}'$. Finally, the alphabet size and encoding complexity follow from the fact that synchronization strings over alphabets of size ${\eps'}^{-O(1)}$ can be constructed in linear time~\cite{haeupler2017synchronization, haeupler2017synchronization3}.
\end{proof}

One can use any list-recoverable error correcting code to obtain insertion-deletion codes according to Theorem~\ref{thm:IndexingBlackBox}. In particular, using the efficient capacity-approaching list-recoverable code introduced by 
Hemenway, Ron-Zewi, and Wootters~\cite{hemenway2017local}
, one obtains the insertion-deletion codes as described in Theorem~\ref{thm:InsDelListDecoding}.
\begin{proof}[{\bf Proof of Theorem~\ref{thm:InsDelListDecoding}}]
By setting parameters $\rho=1-\delta-\frac{\eps}{2}$, $l=\frac{2(\gamma+1)}{\eps}$, and $\epsilon=\frac{\eps}{4}$ in Theorem~\ref{thm:PolyTimeListRecoverable}, one can obtain a family of codes $\mathcal{C}$ that achieves rate $\rho=1-\delta-\frac{\eps}{2}$ and is $(\alpha, l, L)$-recoverable in polynomial time for $\alpha=1-\delta-\eps/4$ and some $L=\exp\left(\log^*n\right)$ (by treating $\gamma$ and $\eps$ as constants). 
Such family of codes can be found over an alphabet $\Sigma_{\mathcal{C}}$ of size $q=O_{\gamma, \eps}(1)$ or infinitely many integer numbers larger than $q$.

Plugging this family of codes into the indexing scheme from Theorem~\ref{thm:IndexingBlackBox} by choosing the parameter $\eps'=\frac{\eps}{4}$, one obtains a family of codes that can recover from
$1-\alpha-\eps' = 1-(1-\delta-\eps/4)-\eps/4 = \delta$ fraction of deletions and $\gamma$-fraction of insertions and achieves a rate of 
$$\frac{1-\delta-\eps/2}{1+\frac{\log|\Sigma_S|}{\log |\Sigma_{\mathcal{C}}|}}$$
which, by taking $|\Sigma_{\mathcal{C}}|$ large enough in terms of $\eps$, is larger than $1-\delta-\eps$. As $\mathcal{C}$ is encodable and decodable in polynomial time, the encoding and decoding complexities of the indexed code will be polynomial as well.
\end{proof}

\begin{remark}
We remark that by using capacity-approaching near-linear-time list-recoverable code introduced in Theorem 7.1 of Hemenway, Ron-Zewi, and Wootters~\cite{hemenway2017local} in the framework of Theorem~\ref{thm:IndexingBlackBox}, one can obtain similar list-decodable insertion-deletion codes as in Theorem~\ref{thm:InsDelListDecoding} with a randomized quadratic time decoding. Further, one can use the efficient list-recoverable in the recent work of Guruswami and Xing~\cite{GuruswamiXing2017} to obtain same result as in Theorem~\ref{thm:InsDelListDecoding} except with polylogarithmic list sizes.
\end{remark}

\exclude{
\subsection{Impossibility of obtaining error-free lists via indexing}\label{sec:ErrorFreeIndexing}
Thus far in this section, we provided a simple indexing scheme that lead to list recovery of $1-\delta-\eps$ fraction of positions in presence of $0\le\delta<1$ fraction of deletions and $\gamma=O(1)$ fraction of insertions. We showed that one can make $\eps>0$ as small of a constant as he wishes by properly setting the parameter of synchronization string used for indexing.

One natural question that may rise by this technique is whether one can index a sequence that is being communicated through an insertion-deletion channel as described above with synchronization strings or some other indexing sequence over a constant-sized alphabet and come up with candidate lists for each position that, except for the ones deleted by adversary, all contain the actual symbol sent in that position and still guarantee a linear average list-size. In other words, obtain an analogy of Theorem~\ref{thm:InsDelListDecodingStep1} with $\eps=0$.

As a side note, we will show that this is impossible, even in the insertion-only case and with any string over a constant-sized alphabet rather than synchronization strings in particular. Let string $S\in\Sigma^n$ be the string used for indexing and $|\Sigma|$ be a constant in terms of $n$. Let adversary insert $\gamma n$ many symbols (for some $\gamma\le1$) that leads to the index-portion of the message received on the receiving end looking like $S'=\left(S[1, n\gamma], S[1, n\gamma], S[n\gamma+1, n]\right)$. Now, for any $i\in(1, n\gamma)$, $S[i]$ might have possibly arrived as $S'[j]$ for any $j\in (i, n\gamma +1)$ that satisfies $S'[j]=S[i]$ as adversary might have inserted $S'[i, j-1]$ right after $S[i-1]$ is sent and $S'[j+1, i+n\gamma]$ right after $S'[j]=S[i]$ is sent (see Figure~\ref{fig:ErrorFreeIndexing}). Therefore, the list assembled for $i$th position must contain any occurrence of $S[i]$ in $S'[i, i+n\gamma]$, the number of which is equal to the number of occurrences of symbol $S[i]$ in $S[1, n\gamma]$. 

\begin{figure}
\centering
\includegraphics[scale=0.5]{ErrorFreeIndexing.pdf}
\caption{Pictorial representation of the notation used in Section~\ref{sec:ErrorFreeIndexing}}
\label{fig:ErrorFreeIndexing}
\end{figure}

This gives that the total list sizes for positions 1 to $n\gamma$ has to be at least 
$$\sum_{x\in\Sigma}\left[\text{number of occurances of }x\text{ in }S[1, n\gamma]\right]^2 \ge \frac{n^2\gamma^2}{|\Sigma|} = \Omega_{\gamma, |\Sigma|}(n^2)$$
 that leads to a linear lower bound on the average candidate list size.}

\section{Upper Bounds on the Rate of List-Decodable Synchronization Codes}\label{sec:UpperBounds}
\subsection{Deletion Codes (Theorem~\ref{thm:converseDeletion})}
\begin{proof}[Proof of Theorem~\ref{thm:converseDeletion}]
To prove this claim, we propose a strategy for the adversary which can reduce the number of strings that may possibly arrive at the receiving side to a number small enough that implies the claimed upper bound for the rate. 

We start by proving the theorem for the case where $\delta q$ is integer. For an arbitrary code $\mathcal{C}$, upon transmission of any codeword, the adversary can remove all occurrences of $\delta q$ least frequent symbols as the total number of appearances of such symbols does not exceed $\delta n$. In case there are more deletions left, adversary may choose to remove arbitrary symbols among the remaining ones. This way, the received string would be a string of $n(1-\delta)$ symbols consisted of only $q-q\delta$ many distinct symbols. Therefore, one can bound above the size of the ensemble of strings that can possibly be received by the\fullOnly{ following.
$$|\mathcal{E}| \le {q \choose q(1-\delta)}\left[q(1-\delta)\right]^{n(1-\delta)}$$}\shortOnly{$|\mathcal{E}| \le {q \choose q(1-\delta)}\left[q(1-\delta)\right]^{n(1-\delta)}$.}
As the best rate any code can get is at most $\frac{\log |\mathcal{E}|}{n\log q}$, the following would be an upper bound for the best rate one might hope for.
\shortOnly{$$R \le \frac{\log |\mathcal{E}|}{n\log q} = \frac{\log {q \choose q(1-\delta)} + n(1-\delta)(\log (q(1-\delta)))}{n\log q} = (1-\delta)\left(1-\log_q \frac{1}{1-\delta}\right)+ o(1)$$}\fullOnly{
\begin{eqnarray*}
R &\le& \frac{\log |\mathcal{E}|}{n\log q}\\
 &=& \frac{\log {q \choose q(1-\delta)} + n(1-\delta)(\log (q(1-\delta)))}{n\log q}\\
&=& (1-\delta)\left(1-\log_q \frac{1}{1-\delta}\right)+ o(1)
\end{eqnarray*}}
This shows that for the case where $q\delta$ is an integer number, there are no family of codes that achieve a rate that is strictly larger than 
$(1-\delta)\left(1-\log_q \frac{1}{1-\delta}\right)$.

We now proceed to the general case where $\delta = d/q + \delta'$ for some integer $d$ and $0\le \delta' < \frac{1}{q}$. We closely follow the idea that we utilized for the former case. The adversary can partition $n$ sent symbols into two parts of size $nq\delta'$ and $n(1-q\delta')$, and then, similar to the former case, removes the $d+1$ least frequent symbols from the first part by performing $\frac{d+1}{q}\cdot nq\delta'$ deletions and $d$ least frequent symbols from the second one by performing $\frac{d}{q}\cdot n(1-q\delta')$ ones. This is possible because 
$\frac{d+1}{q}\cdot nq\delta' + \frac{d}{q}\cdot n(1-q\delta') = n\delta$.
Doing so, the string received after deletions would contain up to $q-d-1$ distinct symbols in its first $nq\delta'\left(1-\frac{d+1}{q}\right)$ positions and up to $q-d$ distinct symbols in the other $n(1-q\delta')\left(1-\frac{d}{q}\right)$ positions.
Therefore, the size of the ensemble of strings that can be received is bounded above as follows.
$$|\mathcal{E}| \le 
{q \choose q-d-1}\left[q-d-1\right]^{nq\delta'\left(1-\frac{d+1}{q}\right)}\cdot
{q \choose q-d}\left[q-d\right]^{n(1-q\delta')\left(1-\frac{d}{q}\right)}
$$
This bounds above the rate of $\mathcal{C}$ as follows.
\begin{eqnarray*}
R &\le& \frac{\log |\mathcal{E}|}{n\log q}\\
 &=& \frac{\log {q \choose q-d-1}  +
 nq\delta'\left(1-\frac{d+1}{q}\right)\log (q-d-1)
+ \log {q \choose q-d} +
 n(1-q\delta')\left(1-\frac{d}{q}\right)\log (q-d)
 }{n\log q}\\
&=& q\delta'\left[(1-(d+1)/q)\left(1-\log_q\frac{1}{1-(d+1)/q}\right)\right]
+
(1-q\delta')\left[(1-d/q)\left(1-\log_q\frac{1}{1-d/q}\right)\right]\\&&+ o(1)
\end{eqnarray*}
\end{proof}

\subsection{Insertion Codes (Theorem~\ref{thm:converse})}
Before providing the proof of Theorem~\ref{thm:converse}, we first point out that any $q-1$ insertions can be essentially used as a single erasure. As a matter of fact, by inserting $q-1$ symbols around the first symbol adversary can make a $1, 2, \cdots, q$ substring around first symbol and therefore, essentially, make the receiver unable to gain any information about it. In fact, with $\gamma n$ insertions, the adversary can repeat this procedure around any $\left\lfloor\frac{\gamma n}{q-1}\right\rfloor$ symbols he wishes. This basically gives that, with $\gamma n$ insertions, adversary can \emph{erase} $\left\lfloor\frac{\gamma n}{q-1}\right\rfloor$ many symbols. Thus, one cannot hope for finding list-decodable codes with rate $1-\frac{\gamma}{q-1}$ or more protecting against $\gamma n$ insertions. 

\begin{proof}[\bf Proof of Theorem~\ref{thm:converse}]
To prove this, consider a code $\mathcal{C}$ with rate 
$R\ge 1- \log_q(\gamma+1)-\gamma\left(\log_q\frac{\gamma+1}{\gamma} - \log_q \frac{q}{q-1}\right)+\eps$ 
for some $\eps > 0$. We will show that there exist $c_0^n$ many codewords in $C$ that can be turned into one specific string $z\in [1..q]^{n(\gamma+1)}$ with $\gamma n$ insertions for some constant $c_0 > 1$ that merely depends on $q$ and $\eps$.

First, the lower bound assumed for the rate implies that
\begin{equation}\label{eqn:CodeSize}
|C| = q^{nR} \ge q^{n\left(1-\log_q(\gamma+1)-\gamma\left(\log_q\frac{\gamma+1}{\gamma} - \log_q \frac{q}{q-1}\right)+\eps\right)}.
\end{equation}

Let $Z$ be a random string of length $(\gamma+1)n$ over the alphabet $[1..q]$. We compute the expected number of codewords of $\mathcal{C}$ that are subsequences of $Z$ denoted by $X$.

\begin{eqnarray}
\mathbb{E}[X]
 &=& \sum_{y\in\mathcal{C}}\Pr\{y\text{ is a subsequence of }Z\}\nonumber\\
&=& \sum_{y\in\mathcal{C}}\sum_{1\le a_1<a_2<\cdots <a_n \le n(\gamma +1)} \frac{1}{q^n}\left(1-\frac{1}{q}\right)^{a_n-n}\label{eqn:ExpectationComputationStep1}\\
&=& |\mathcal{C}|\left(q-1\right)^{-n}\sum_{l=n}^{n(1+\gamma)} {l \choose n} \left(\frac{q-1}{q}\right)^l\nonumber\\
&\le& |\mathcal{C}|\left(q-1\right)^{-n}n\gamma{n(1+\gamma) \choose n} \left(\frac{q-1}{q}\right)^{n(1+\gamma)}\label{eqn:ExpectationComputationStep2}\\
&=& n\gamma|\mathcal{C}| (q-1)^{n\gamma}q^{-n(1+\gamma)}2^{n(1+\gamma)H\left(\frac{1}{1+\gamma}\right)+o(n)}\nonumber\\
&=& n\gamma|\mathcal{C}| q^{n\left(
\gamma \log_q(q-1)
-1-\gamma
+\log_q (1+\gamma)
+\gamma\log_q \frac{1+\gamma}{\gamma}
\right)+o(1)}\nonumber\\
&=&q^{n\eps+o(n)}\label{eqn:ExpectationComputationStep3}
\end{eqnarray}
Step~\eqref{eqn:ExpectationComputationStep1} is obtained by conditioning the probability of $y$ being a subsequence of $Z$ over the leftmost occurrence of $y$ in $Z$ indicated by $a_1, a_2, \cdots, a_n$ as indices of $Z$ where the leftmost occurrence of $y$ is located. In that event, $Z_{a_i}$ has to be similar to $y_i$ and $y_i$ cannot appear in $Z[y_{i-1}+1, y_{i}-1]$. Therefore, the probability of this event is $\left(\frac{1}{q}\right)^n\left(1-\frac{1}{q}\right)^{a_n-n}$. 
To verify Step~\eqref{eqn:ExpectationComputationStep2}, we show that the summation in previous step takes its largest value when $l=n(1+\gamma)$ and bound the summation above by $n\gamma$ times that term. To see that ${l \choose n} \left(\frac{q-1}{q}\right)^l$ is maximized for $l=n(1+\gamma)$ in $n \le l \le n(1+\gamma)$ it suffices to show that the ratio of consecutive terms is larger than one for $l \le n(1+\gamma)$:
$$\frac{{l \choose n} \left(\frac{q-1}{q}\right)^{l}}{{l-1 \choose n} \left(\frac{q-1}{q}\right)^{l-1}} = \frac{l}{l-n} \cdot\frac{q-1}{q}=\frac{1-\frac{1}{q}}{1-\frac{n}{l}} \ge 1$$
The last inequality follows from the fact that $l\le n(\gamma+1)\le nq \Rightarrow \frac{1}{q} < \frac{n}{l}$.

Finally, by \eqref{eqn:ExpectationComputationStep3}, there exists some $z\in[1..q]^{(a+1)n}$ to which at least $q^{\eps n+o(n)}$, i.e., exponentially many codewords of $\mathcal{C}$ are subsequences. Therefore, polynomial-sized list decoding for received message $z$ is impossible and proof is complete.
\end{proof}

\global\def\AlphabetSizeSBGapDependenceSection{
\section{Dependence of the Alphabet Size on the Gap from the Singleton Bound (Corollary~\ref{cor:alphabetSize})}\label{sec:CapacityGap}

Before providing the proof of Corollary~\ref{cor:alphabetSize}, we start with an informal argument that shows the necessity of exponential dependence of alphabet size on $\eps^{-1}$. Consider a communication of $n$ symbols out of an alphabet of size $q$ where adversary is allowed to delete $\delta<\frac{1}{2}$ fraction of symbols. Note that one can map symbols of the alphabet to binary strings of length $\log q$. Let adversary act as follows. He looks at the first $2n\delta$ symbols and among symbols whose binary representation start with zero and symbols whose binary representation start with one chooses the least frequent ones and removes them. This can be done by up to $n\delta$ symbol deletions and he can use the remainder of deletions arbitrarily. This way, the receiver receives $n(1-\delta)$ symbols where only $q/2$ distinct symbols might appear in the first $n\delta$ ones. This means that receiver can essentially get $n(1-\delta)\log q - n\delta + 1$ bits of information which implies a rate upper bound of $1 - \delta - \frac{1}{\log q}$. Therefore, to get a rate of $1-\delta-\eps$, alphabet size has to satisfy the following.
$$\frac{1}{\log q} < \eps \Rightarrow q > 2^{\eps^{-1}}.$$

\begin{proof}[\bf Proof of Corollary~\ref{cor:alphabetSize}]
According to Theorem~\ref{thm:converseDeletion}, in order to obtain a family of codes of rate $1-\delta-\eps$, the following condition should hold.
\begin{eqnarray*}
1-\delta-\eps &\le& q\delta'\left[(1-(d+1)/q)\left(1-\log_q\frac{1}{1-(d+1)/q}\right)\right]\\
&&+(1-q\delta')\left[(1-d/q)\left(1-\log_q\frac{1}{1-d/q}\right)\right]\\
\Rightarrow \eps &\geq& q\delta'\left[(1-(d+1)/q)\log_q\frac{1}{1-(d+1)/q}\right]
+(1-q\delta')\left[(1-d/q)\log_q\frac{1}{1-d/q}\right]\\
\Rightarrow q &\geq& 2^{\frac{1}{\eps} \cdot
q\delta'\left[(1-(d+1)/q)\log\frac{1}{1-(d+1)/q}\right]+(1-q\delta')\left[(1-d/q)\log\frac{1}{1-d/q}\right]} \ge 2^{\frac{f(\delta)}{\eps}}
\end{eqnarray*}
which finishes the proof. The only step that might need some clarification is the following inequality that bounds below the right hand term with some function that only depends on $\delta$ and is non-zero in $(0, 1)$.
$$q\delta'\left[(1-(d+1)/q)\log\frac{1}{1-(d+1)/q}\right]+(1-q\delta')\left[(1-d/q)\log\frac{1}{1-d/q}\right] \ge f(\delta)$$
Note that the left hand side is the convex combination of the points that are obtained by evaluating function $g(x)=(1-x)\log\frac{1}{1-x}$ at multiples of $\frac{1}{q}$ (See Figure~\ref{fig:rounding}). We denote the left hand term by $g'(\delta, q)$.

\begin{figure}
\centering
\fullOnly{\includegraphics[scale=0.7]{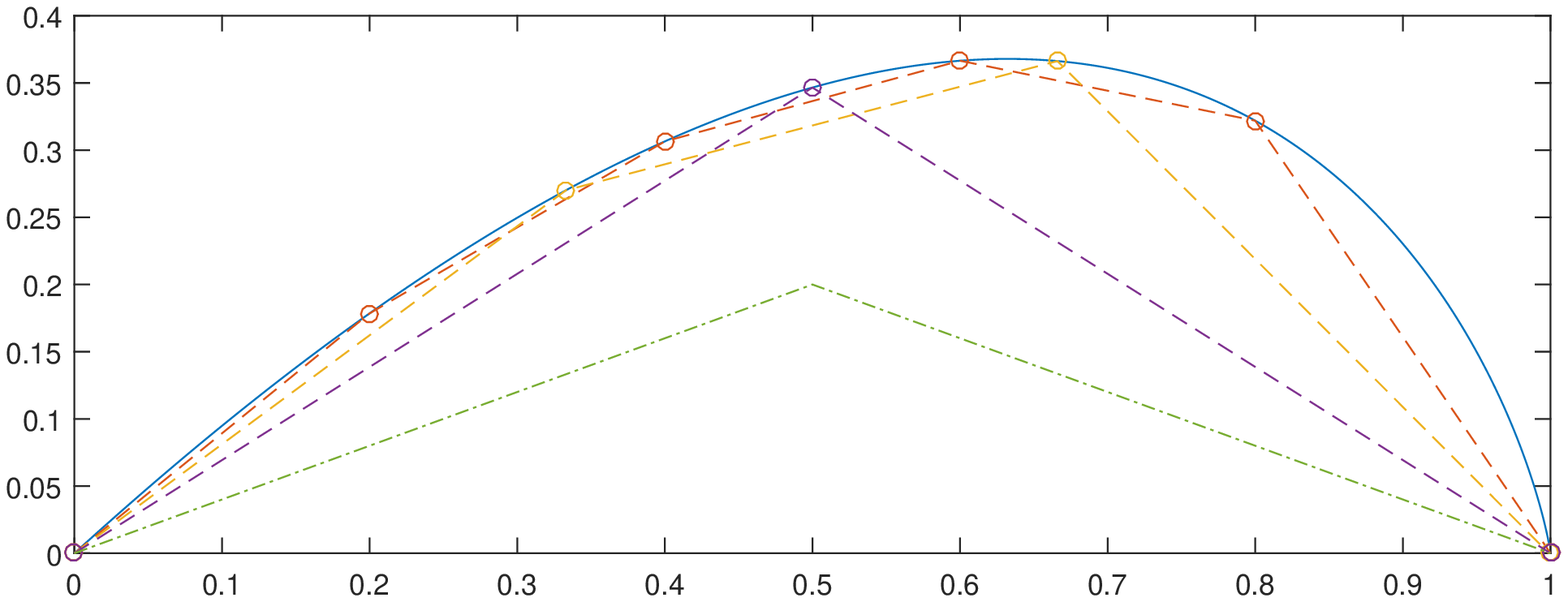}}
\shortOnly{\includegraphics[scale=0.6]{rounding.eps}}
\caption{(I) Solid line: $g(\delta)$,\quad (II) Dashed-dotted line: $f(\delta)$,\quad (III) Dashed lines: $g'(\delta, q)$ for $q=2, 3, 5$.}
\label{fig:rounding}
\end{figure}

We only need to find a function of $\delta$ that is non-zero in $(0,1)$ and is strictly smaller than such convex combinations for any $q$. One good candidate is the simple piece-wise linear function $f(\cdot)$ that consists of a segment from $(0, 0)$ to $(0.5, 0.2)$ and a segment from $(0.5, 0.2)$ to $(1, 0)$. Now, we show that for this choice of $f$, $g'(\delta, q)\geq f(\delta)$ for any $q$. 

Note that $f(\delta) \le g(\delta)$, therefore, for any $\frac{i}{q}\le\delta\le\frac{i+1}{q}$ that both $\frac{i}{q}$ and $\frac{i+1}{q}$ are smaller than $\frac{1}{2}$ or both are larger than $\frac{1}{2}$, $g'(\delta, q)\geq f(\delta)$. 
Further, for the case of $\frac{\lfloor q/2\rfloor}{q}\le \delta \le \frac{\lceil q/2\rceil}{q}$, we draw readers attention to the fact that constant $0.2$ in the definition of $f(\cdot)$ has been chosen small enough so that $g'(1/2, 3)\geq f(1/2)$ which implies that $g'(1/2, q)\geq f(1/2)$ for any $q$ and consequently $g'(\delta, q)\geq f(\delta)$ for all $\frac{\lfloor q/2\rfloor}{q}\le \delta \le \frac{\lceil q/2\rceil}{q}$ (See Figure~\ref{fig:rounding}).

\end{proof}
}
\fullOnly{\AlphabetSizeSBGapDependenceSection}

\section{Analysis of Random Codes}\label{sec:RandomCodeLowerBounds}
\global\def\AnalysisOfRandomDeletionCodes{
\subsection{Random Deletion Codes (Theorem~\ref{thm:achievabilityViaRandomCodesDeletion})}
\begin{proof}[Proof of Theorem~\ref{thm:achievabilityViaRandomCodesDeletion}]
Let $\mathcal{C}$ be a random code that maps any $x \in [1..q]^{nR}$ to some member of $[1..q]^n$ denoted by $E_{\mathcal{C}}(x)$ that has been chosen uniformly at random.
Note that in a deletion channel, for any $y=E_{\mathcal{C}}(x)$ sent by Alice, the message $z\in[1..q]^{n(1-\delta)}$ Bob receives is necessarily a subsequence of $y$. The probability of a fixed $z\in[1..q]^{n(1-\delta)}$ being a subsequence of a random $y\in[1..q]^n$ is bounded above as follows.

\begin{eqnarray}
\Pr_y\left\{z\text{ is a substring of }y\right\}
&=& \sum_{1\le a_1 < \cdots < a_{n(1-\delta)} \le n} q^{-n(1-\delta)}\left(1-\frac{1}{q}\right)^{a_{n(1-\delta)}-n(1-\delta)}\label{eqn:DeletionExpectationComputationStep1}\\
&=& \sum_{l=n(1-\delta)}^n {l \choose n(1-\delta)} q^{-n(1-\delta)}\left(1-\frac{1}{q}\right)^{l-n(1-\delta)}\nonumber\\
&\le& n\delta{n \choose n(1-\delta)} q^{-n(1-\delta)}\left(1-\frac{1}{q}\right)^{n\delta} \label{eqn:DeletionExpectationComputationStep2}\\
&=& n\delta 2^{nH(1-\delta)+o(n)} q^{-n}\left(q-1\right)^{n\delta}\nonumber\\
&=& q^{n\left(
(1-\delta)\log_q\frac{1}{1-\delta}+\delta\log_q\frac{1}{\delta}+\delta\log_q(q-1)-1+o(1)
\right)}\label{eqn:DeletionExpectationComputationStep3}
\end{eqnarray}

Step~\eqref{eqn:DeletionExpectationComputationStep1} is obtained by conditioning the probability over the leftmost appearance of $z$ in $y$ as a subsequence. $a_1, \cdots, a_{n(1-\delta)}$ denote the positions in $y$ where the first instance of $z$ is occurred. Note that in such event, $y_{a_i}$ has to be identical to $z_i$ and symbol $z_i$ cannot appear in $y[a_{i-1}+1, a_i-1]$. Hence, the probability of $z$ appearing in $a_1, \cdots, a_{n(1-\delta)}$ as the leftmost occurrence is $\left(\frac{1}{q}\right)^{n(1-\delta)}\left(1-\frac{1}{q}\right)^{a_n - n(1-\delta)}$.

To verify \eqref{eqn:DeletionExpectationComputationStep2}, we show that the largest term of the summation in the previous step is attained at $l=n$ and bound the summation above by $n$ times that term. To prove this, we simply show that the ratio of the consecutive terms of the summation is larger than one.
$$\frac{
{l \choose n(1-\delta)} q^{-n(1-\delta)}\left(1-\frac{1}{q}\right)^{l-n(1-\delta)}
}{
{l-1 \choose n(1-\delta)} q^{-n(1-\delta)}\left(1-\frac{1}{q}\right)^{l-1-n(1-\delta)}
}=
{\frac{l}{l-n(1-\delta)}\cdot\left(1-\frac{1}{q}\right)} = \frac{1-\frac{1}{q}}{1-\frac{n(1-\delta)}{l}} \ge 1$$
where the last inequality follows from the fact that
$\delta < 1-\frac{1}{q} \Rightarrow \frac{1}{q} < 1-\delta \Rightarrow \frac{1}{q} < \frac{n(1-\delta)}{l}$.


According to~\eqref{eqn:DeletionExpectationComputationStep3}, for a random code $\mathcal{C}$ of rate $R$, the probability of $l+1$ codewords $E_{\mathcal{C}}(m_1), \allowbreak E_{\mathcal{C}}(m_2), \cdots, E_{\mathcal{C}}(m_{l+1})$ for $m_1, \cdots, m_{l+1}\in [1..q]^{nR}$ containing some fixed $z\in[1..q]^{n(1-\delta)}$ as a subsequence is bounded above as follows.
$$\Pr\left\{z\text{ is a subseq. of all }E_\mathcal{C}(m_1), \cdots
\right\}\le
q^{n(l+1)\left(
(1-\delta)\log_q\frac{1}{1-\delta}+\delta\log_q\frac{1}{\delta}+\delta\log_q(q-1)-1+o(1)
\right)}
$$
Hence, by applying the union bound over all $z\in[1..q]^{n(1-\delta)}$, the probability of random code $\mathcal{C}$ not being $l$-list decodable can be bounded above as follows.
\begin{eqnarray*}
&&\Pr_{\mathcal{C}}\left\{\exists z\in[1..q]^{n(1-\delta)}, m_1, \cdots, m_{l+1}\in [1..q]^{nR}\text{ s.t. }z\text{ is a subseq. of all }E_\mathcal{C}(m_1), \cdots
\right\}\\&\le& q^{n(1-\delta)} \left(q^{Rn}\right)^{l+1} q^{n(l+1)\left(
(1-\delta)\log_q\frac{1}{1-\delta}+\delta\log_q\frac{1}{\delta}+\delta\log_q(q-1)-1+o(1)
\right)}
\end{eqnarray*}
As long as $n$'s coefficient in the exponent of $q$ is negative, this probability is less than one and drops exponentially to zero as $n$ grows. \shortOnly{This happens if and only if 
\begin{equation}
R < 1-(1-\delta)\log_q\frac{1}{1-\delta} - \delta \log_q\frac{1}{\delta}-\delta\log_q(q-1)-\frac{1-\delta}{l+1}-o(1)
\label{eqn:rateLimitDeletion}
\end{equation}
}
\fullOnly{\begin{eqnarray}
&&n(1-\delta)+Rn(l+1)+n(l+1)\left(
(1-\delta)\log_q\frac{1}{1-\delta}+\delta\log_q\frac{1}{\delta}+\delta\log_q(q-1)-1+o(1)
\right) < 0\nonumber\\
&\Leftrightarrow& R < 1-(1-\delta)\log_q\frac{1}{1-\delta} - \delta \log_q\frac{1}{\delta}-\delta\log_q(q-1)-\frac{1-\delta}{l+1}-o(1)
\label{eqn:rateLimitDeletion}
\end{eqnarray}}
Therefore, the random code $\mathcal{C}$ with any rate $R$ that satisfies \eqref{eqn:rateLimitDeletion} is list-decodable with a list of size $l$ with high probability.

We now proceed to prove the second side of Theorem~\ref{thm:achievabilityViaRandomCodesDeletion}. We will show that any family of random codes with rate $R > 1-(1-\delta)\log_q\frac{1}{1-\delta} - \delta \log_q\frac{1}{\delta}-\delta\log_q(q-1)$ is not list decodable with high probability. 

To see this, fix a received word $z\in[1..q]^{n(1-\delta)}$. Let $X_i$ be the indicator random variable that indicates whether $i$th codeword contains $z$ as a subsequence or not. Probability of $X_i=1$ was calculated in \eqref{eqn:DeletionExpectationComputationStep3} and gives that the expected number of codewords that contain $z$ is 
$$\mu=q^{nR}\cdot q^{n\left(
(1-\delta)\log_q\frac{1}{1-\delta}+\delta\log_q\frac{1}{\delta}+\delta\log_q(q-1)-1+o(1)
\right)}.$$
Note that for $R > 1-(1-\delta)\log_q\frac{1}{1-\delta} - \delta \log_q\frac{1}{\delta}-\delta\log_q(q-1)$ and large enough $n$, this number is exponentially large in terms of $n$. Further, as $X_i$s are independent, Chernoff bound gives the following.
$$\Pr\{X> \mu/2\} \ge 1-e^{-\mu/8}$$
Which implies that, with high probability, exponentially many codewords contain $z$ and, therefore, the random code is not list-decodable.
\end{proof}}
\fullOnly{\AnalysisOfRandomDeletionCodes}

\subsection{Random Insertion Codes (Theorem~\ref{thm:achievabilityViaRandomCodes})}
\begin{proof}[Proof of Theorem~\ref{thm:achievabilityViaRandomCodes}]
We prove the claim by considering a random code $\mathcal{C}$ that maps any $x\in [1..q]^{Rn}$ to some uniformly at random chosen member of $[1..q]^{n}$ denoted by $E_{\mathcal{C}}(x)$ and showing that it is possible to list-decode $\mathcal{C}$ with high probability. 

Note that in an insertion channel, the original message sent by Alice is a substring of the message received on Bob's side.
Therefore, a random code $\mathcal{C}$ is $l$-list decodable if for any $z\in[1..q]^{\left(\gamma+1\right)n}$, there are at most $l$ codewords of $\mathcal{C}$  that are subsequences of $z$.
%
For some fixed $z\in[1..q]^{\left(\gamma+1\right)n}$, the probability of some uniformly at random chosen $y \in [1..q]^n$ being a substring of $z$ can be bounded above as follows.
\begin{eqnarray*}
\Pr_y\left\{y\text{ is a subsequence of }z\right\} &\le& {(\gamma+1)n \choose n} q^{-n}\\
&=& 2^{n(\gamma+1) H(\frac{1}{\gamma+1})+o(n)} q^{-n}\\
&=& q^{n\left(\log_q(\gamma+1)+\gamma\log_q\frac{\gamma+1}{\gamma}-1+o(1)\right)}
\end{eqnarray*}
Therefore, for a random code $\mathcal{C}$ of rate $R$ and any $m_1, \cdots, m_{l+1} \in [1..q]^{nR}$ and some fixed $z \in [1..q]^{n(\gamma+1)}$:
$$\Pr\left\{E_\mathcal{C}(m_1), \cdots, E_\mathcal{C}(m_{l+1})\text{ are subsequences of } z\right\}\le q^{n(l+1)\left(\log_q(\gamma+1)+\gamma\log_q\frac{\gamma+1}{\gamma}-1+o(1)\right)}$$
Hence, using the union bound over $z \in [1..q]^{n(\gamma+1)}$, for the random code $\mathcal{C}$:
\begin{eqnarray}
&&\Pr_{\mathcal{C}}\left\{\exists z \in [1..q]^{n(\gamma+1)}, m_1, \cdots, m_{l+1}\in q^{nR}\text{ s.t. }E_\mathcal{C}(m_1), \cdots, E_\mathcal{C}(m_{l+1})\text{ are subsequences of } z\right\}\nonumber\\
&\le& q^{n(\gamma +1)} \left(q^{Rn}\right)^{l+1} q^{n(l+1)\left(\log_q(\gamma+1)+\gamma\log_q\frac{\gamma+1}{\gamma}-1+o(1)\right)}\nonumber\\
 &=& q^{n(\gamma+1)+Rn(l+1)+n(l+1)\left(\log_q(\gamma+1)+\gamma\log_q\frac{\gamma+1}{\gamma}-1+o(1)\right)}\label{eqn:InsertionFailingProb}
\end{eqnarray}
As long as $q$'s exponent in~\eqref{eqn:InsertionFailingProb} is negative, this probability is less than one and drops exponentially to zero as $n$ grows. 
\begin{eqnarray}
&&n(\gamma+1)+Rn(l+1)+n(l+1)\left(\log_q(\gamma+1)+\gamma\log_q\frac{\gamma+1}{\gamma}-1+o(1)\right) < 0\nonumber\\
&\Leftrightarrow& R < 1  - \log_q(\gamma +1) - \gamma\log_q\frac{\gamma+1}{\gamma} - \frac{\gamma +1}{l+1} +o(1)\label{eqn:rateLimit}
\end{eqnarray}
Therefore, the family of random codes with any rate $R$ that satisfies \eqref{eqn:rateLimit} is list-decodable with a list of size $l$ with high probability.	
\end{proof}

\shortOnly{\newpage}
\appendix
\begin{center}
\bfseries \huge Appendices
\end{center}

\section{Rate Needed for Unique Decoding}\label{app:UniqueDecodingRate}
In the following claim we assert that for every $\delta,\gamma > 0$ the rate of any insdel code that uniquely recovers from $\delta$-fraction deletions and $\gamma$-fraction insertions is at most $1 - (\gamma+\delta)$.

\begin{claim}
If $C$ is an insdel code of rate $R$ that can recover from $\delta$-fraction insertions and $\gamma$-fraction deletions with unique decoding, then $R \leq 1 - (\delta+\gamma)$.
\end{claim}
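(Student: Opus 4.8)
The plan is to show that unique decodability forces each codeword to be determined by a short prefix, and then read off the rate bound by counting. Set $m \triangleq (1-\delta-\gamma)n$; we may assume $\delta+\gamma\le 1$ (so $m\ge 0$), since otherwise the claimed bound is negative and hence vacuous for a positive-rate code, and we note $\delta n,\gamma n$ (hence $m$) are integers as they count symbols. First I would show that the map $\pi:C\to\Sigma^m$ sending a codeword to its first $m$ coordinates is injective. Granting this, $|C|\le q^{m}$, so $R=\tfrac{\log_q|C|}{n}\le\tfrac{m}{n}=1-\delta-\gamma$, which is the claim.

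The key step is an explicit construction of a single received word that witnesses confusability of any two codewords with the same $\pi$-image. Suppose $x\ne y$ are codewords with $x_i=y_i$ for all $i\le m$; call this common prefix $s$ (of length $m$), and let $a=x_{m+1}\cdots x_{m+\delta n}$ and $b=y_{m+1}\cdots y_{m+\delta n}$, each a block of $\delta n$ symbols (this uses $m+\delta n=(1-\gamma)n\le n$). Deleting the last $\gamma n$ symbols of $x$ yields the string $sa$, and deleting the last $\gamma n$ symbols of $y$ yields $sb$. Now consider $w\triangleq s\,a\,b$, of length $m+2\delta n=(1-\gamma)n+\delta n$. One checks directly that $sa$ is a subsequence of $w$ (match $s$ to the prefix of $w$ and $a$ to the next block, dropping the final block $b$), so $w$ is reachable from $x$ by $\gamma n$ deletions followed by $\delta n$ insertions; symmetrically $sb$ is a subsequence of $w$ (match $s$ to the prefix and $b$ to the last block, skipping $a$), so $w$ is also reachable from $y$ by $\gamma n$ deletions followed by $\delta n$ insertions. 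Since both $x$ and $y$ can be corrupted into the same word $w$ within the allowed budget, no decoder can recover both, contradicting unique decodability. Hence $\pi$ is injective.

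The points needing care are purely bookkeeping: the construction must use exactly $\gamma n$ deletions and exactly $\delta n$ insertions rather than merely ``at most'' that many, which is why the tail block and the deleted suffix are chosen to have those exact lengths, and it must respect the model's prescribed order of deletions-then-insertions, which it does by construction. I do not anticipate a genuine obstacle here; the one conceptual subtlety is that the naive strategy of simply deleting the last $(\delta+\gamma)n$ coordinates is unavailable since only $\gamma n$ deletions are permitted, so the insertion budget has to be spent splicing one codeword's tail block onto the other's in order to produce a common corrupted word.
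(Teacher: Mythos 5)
Your proof is correct and takes essentially the same approach as the paper's: pigeonhole on a prefix of length $(1-\delta-\gamma)n$ to find two codewords with a common prefix, then splice one codeword's middle block onto the other's to produce a single received word reachable from both within the error budget (the paper calls the common corrupted word $stv$; yours is $sab$). You have additionally been careful to match the deletion/insertion counts exactly to the claim's phrasing of ``$\delta$-fraction insertions and $\gamma$-fraction deletions,'' a point at which the paper's own proof harmlessly transposes the roles of $\delta$ and $\gamma$; since the bound depends only on $\delta+\gamma$, this does not affect the result.
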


\begin{proof}
Let $C \subseteq \Sigma^n$ be a code with $q^k$  codewords, where $q = |\Sigma|$, for some integer $k > (1 - (\delta+\gamma))n+1$. Consider the projection of codewords to the first $k-1$ coordinates.
By the pigeonhole principle there must be two codewords $x,y \in C$ that agree on the first $k-1$ coordinates. Let $x = stu$ and $y = svw$ where $s$ is of length $k-1$, $t,v$ are of length $\gamma n$, and $u,w$ are of length $\delta n$. Now consider the string $stv$: This can be obtained from either $x$ or $y$ by first deleting the last $\delta n$ coordinates, and then inserting either $t$ (for $y$) or $v$ (for $x$). Thus no unique decoder can uniquely decode this code from $\delta$ fraction insertions and $\gamma$ fraction deletions. 
\end{proof}

The results of Haeupler and Shahrasbi~\cite{haeupler2017synchronization} in contrast show that given $\alpha$ and $\eps > 0$ there is a single code $C$ of rate $1-\alpha-\eps$ that can recover from $\delta$ fraction deletions
and $\gamma$ fraction insertions for any choice of $\delta,\gamma$ with $\gamma + \delta \leq \alpha$. 
Corollary~\ref{cor:unique-alphabetSize} shows that any such result must have exponentially large alphabet size in $\eps$ though it does not rule out the possibility that there may exist specific choices of $\gamma$ and $\delta$ for which smaller alphabets may suffice.


\shortOnly{
\AlphabetSizeSBGapDependenceSection

\section{Analysis of Random Codes (Cont.)}\label{sec:RandomCodeLowerBoundsCont}
\AnalysisOfRandomDeletionCodes
}

\newpage
\bibliographystyle{plain}
\bibliography{bibliography}

\end{document}